\documentclass[11pt]{article}
\usepackage[margin=1in]{geometry}
\usepackage{amsmath,amssymb,amsthm,mathtools}
\usepackage{booktabs}
\usepackage{graphicx}
\usepackage{microtype}
\usepackage{enumitem}
\usepackage{url}
\usepackage[hidelinks]{hyperref}
\graphicspath{{figures/}}

\newtheorem{theorem}{Theorem}[section]
\newtheorem{lemma}[theorem]{Lemma}
\newtheorem{proposition}[theorem]{Proposition}
\newtheorem{corollary}[theorem]{Corollary}
\newtheorem{remark}[theorem]{Remark}
\newtheorem{definition}[theorem]{Definition}

\newcommand{\MMS}{\operatorname{MMS}}
\newcommand{\PMMS}{\operatorname{PMMS}}
\newcommand{\EFX}{\operatorname{EFX}}
\newcommand{\OPT}{\operatorname{OPT}}
\newcommand{\PoA}{\operatorname{PoA}}

\title{The Exact MMS Guarantees of EFX and PMMS}
\author{%
  Qinghua Qin$^{1,2,3}$\thanks{Email: \texttt{tsinghuaqin@pku.org.cn}, \texttt{qinqinghua@mail.las.ac.cn}.}\\[4pt]
  \small $^{1}$University of Chinese Academy of Sciences \quad $^{2}$Chinese Academy of Sciences \quad $^{3}$Huawei Technologies Co., Ltd.
}
\date{August 2026}

\begin{document}
\maketitle

\begin{abstract}
Envy-freeness up to any good ($\EFX$) and pairwise maximin share ($\PMMS$) are standard local fairness criteria for indivisible goods, whereas maximin share ($\MMS$) is a global benchmark. We determine the exact quantitative relationship between these local fairness notions and the global $\MMS$ guarantee under nonnegative additive valuations. We show that the optimal universal factor for both notions is
\[
\rho^{\EFX\to\MMS}=\rho^{\PMMS\to\MMS}=\frac{10}{17}.
\]

We prove the lower bound by a combinatorial charging argument. After an initial reduction, both $\EFX$ and $\PMMS$ imply the same local condition on every foreign bundle from the perspective of a focal agent: deleting its least valuable good leaves value at most the focal bundle. A three-piece concave weight function translates this local condition into the global $10/17$ guarantee. We then construct an explicit family of complete allocations that are simultaneously $\PMMS$ and $\EFX_0$, whose $\MMS$ ratios converge to $10/17$, showing that both constants are tight even in the presence of zero-valued goods. The argument also gives $\alpha\text{-}\EFX \Rightarrow (10\alpha/17)\text{-}\MMS$.

Finally, we establish an exact correspondence between these fair-division guarantees and scheduling equilibria. For every fixed number of agents $n$, the $\EFX$-to-$\MMS$ extremal ratio equals the reciprocal of the pure price of anarchy for selfish identical-machine covering. Similarly, the $\PMMS$-to-$\MMS$ ratio equals the reciprocal of a locality gap based on exact pairwise machine repartition. These correspondences explain why the constant $10/17$ governs both problems.
\end{abstract}

\section{Introduction}

The fair allocation of indivisible items is a central problem in multi-agent resource allocation, with applications ranging from the division of estates and divorce settlements to the assignment of computational tasks, cloud servers, and course seats to students~\cite{Budish2011,CaragiannisEtAl2019,AmanatidisSurvey2023}. In these settings, items such as computing servers, course seats, or physical assets cannot be divided fractionally without losing their utility, and monetary compensations are often unavailable or inappropriate. 

When allocating indivisible goods, classical fairness notions such as envy-freeness (EF) and proportionality (PROP) cannot always be satisfied. For example, when two agents share a single indivisible item of positive value, one agent necessarily receives nothing and envies the other, while also receiving less than half of the total value. To address this combinatorial obstacle, the literature has developed two main relaxation approaches:
\begin{itemize}[leftmargin=2em]
\item \textbf{Global Share Benchmarks:} The maximin share ($\MMS$), introduced by Budish~\cite{Budish2011} and studied extensively by Kurokawa, Procaccia, and Wang~\cite{KurokawaEtAl2018}, measures the value an agent can secure by partitioning all goods into $n$ bundles and receiving the least valuable one. It captures the conservative guarantee an agent would expect if they were the divider in a divide-and-choose game with adversarial opponents.
\item \textbf{Local Envy-Free Relaxations:} Envy-freeness up to any good ($\EFX$)~\cite{CaragiannisEtAl2019,AmanatidisSurvey2023} requires that an agent's envy toward any other bundle is eliminated upon removing any single item of positive value. Pairwise maximin share ($\PMMS$)~\cite{CaragiannisEtAl2019} is a stronger pairwise requirement where the union of any two bundles is partitioned optimally between the two agents, so that each agent receives at least their 2-agent maximin share on that pair.
\end{itemize}

A basic quantitative question connects these two perspectives:
\begin{quote}
\emph{How much of an agent's global $\MMS$ guarantee is implied by local fairness notions such as $\EFX$ or $\PMMS$?}
\end{quote}

Understanding this local-to-global relationship is important for several reasons. In distributed or decentralized systems, verifying local envy conditions between pairs of neighboring agents is often computationally and communicationally simpler than coordinating global partitions. However, individual participants ultimately care about their overall share of the resources, which is reflected by $\MMS$. 

Amanatidis, Birmpas, and Markakis~\cite{ABM2018} initiated the quantitative study of this question. They proved that every $\EFX$ allocation guarantees at least $4/7 \approx 0.5714$ of $\MMS$, and showed that $\PMMS$ implies $\EFX$. They also gave upper-bound constructions near $0.5914$ using Sylvester-type sequences. However, the exact universal constant remained open for both notions, and it was unknown whether the stronger pairwise flexibility of $\PMMS$ yields a strictly higher global $\MMS$ guarantee than $\EFX$.

\paragraph{Main result.}
We resolve this problem by showing that the optimal universal constant for both $\EFX$ and $\PMMS$ is exactly $10/17$:
\begin{equation}\label{eq:mainconstant}
\boxed{\rho^{\EFX\to\MMS}=\rho^{\PMMS\to\MMS}=\frac{10}{17}\approx 0.588235.}
\end{equation}
The equality of the two constants shows that although $\PMMS$ provides a stronger pairwise protection by allowing arbitrary bipartitions between pairs of agents, this additional pairwise flexibility does not improve the worst-case global $\MMS$ guarantee beyond $10/17$.

\paragraph{Combinatorial charging argument.}
To establish the lower bound $\rho \ge 10/17$, we use a charging argument. Let the focal agent's bundle value be normalized to one. After eliminating zero-valued goods and singleton bundles, $\EFX$ implies that for every foreign bundle $B$,
\begin{equation}\label{eq:localconditionintro}
v(B)-\min_{g\in B}v(g)\le1.
\end{equation}
We show that the pairwise maximin condition in $\PMMS$ independently implies the same inequality~\eqref{eq:localconditionintro}.

We then define a continuous, piecewise concave weight function $w:[0,1]\to[0,1/2]$ such that:
\begin{enumerate}[label=(\roman*),leftmargin=2em]
\item Every foreign bundle satisfying the local condition~\eqref{eq:localconditionintro} has total weight at most $1$, while the focal bundle has weight strictly less than $1$;
\item Every hypothetical benchmark bundle with value at least $17/10$ has total weight at least $1$.
\end{enumerate}
If an allocation admitted an $\MMS$ partition whose parts all had value greater than $17/10$, the total weight required by the $\MMS$ partition would exceed the total weight available in the allocation, which is impossible.

\paragraph{The tight configuration.}
The breakpoints of $w$ at $1/2$ and $2/3$ correspond to changes in the structure of bundles satisfying~\eqref{eq:localconditionintro}. The threshold $17/10$ is tight because the multiset $\{1, 1/2, 1/5\}$ has total value $1+1/2+1/5 = 17/10$ and total weight $w(1)+w(1/2)+w(1/5) = 1/2+1/3+1/6 = 1$.

\paragraph{Explicit tight constructions.}
We show that $10/17$ cannot be improved by constructing an explicit parametric family of fair-division instances. The construction combines seven perturbed blocks containing goods of four values across geometric scales. Each block supports two simultaneous exact partitions:
\begin{itemize}[leftmargin=2em]
\item A \emph{local partition} into bundles satisfying~\eqref{eq:localconditionintro}, certifying that the allocation is simultaneously $\PMMS$ and $\EFX_0$;
\item A \emph{benchmark partition} into bundles of value $17/10-\delta_\ell$, certifying an $\MMS$ value close to $17/10$.
\end{itemize}
We also analyze why the Sylvester construction of~\cite{ABM2018} converges to $c_\infty \approx 0.591355 > 10/17$: we prove that the Sylvester recursion is optimal within the restricted class of homogeneous layered bundles, and that mixed bundles are necessary to reach $10/17$.

\paragraph{Equivalence with machine covering.}
We also show that this fair-division constant connects directly to machine covering games. For every fixed number of agents $n$:
\begin{enumerate}[label=(\alph*),leftmargin=2em]
\item The $\EFX$-to-$\MMS$ ratio $\rho_n^{\EFX}$ equals the reciprocal of the pure price of anarchy ($\PoA_n^{\mathrm{cov}}$) for selfish identical-machine covering games;
\item The $\PMMS$-to-$\MMS$ ratio $\rho_n^{\PMMS}$ equals the reciprocal of the locality gap $\Gamma_n$ under exact two-machine repartition stability.
\end{enumerate}
These equivalences explain why the constant $17/10$ from scheduling games appears as $10/17$ in fair division.

\paragraph{Summary of contributions.}
\begin{enumerate}[label=(\roman*),leftmargin=2em]
\item We prove that every $\EFX$ allocation is $10/17$-$\MMS$.
\item We prove that every $\PMMS$ allocation is $10/17$-$\MMS$, directly from pairwise maximin shares.
\item We construct an explicit family of $\PMMS$ and $\EFX_0$ allocations whose $\MMS$ ratios converge to $10/17$, showing both bounds are tight.
\item We extend the lower bound to approximate fairness: $\alpha\text{-}\EFX \Rightarrow (10\alpha/17)\text{-}\MMS$.
\item We establish exact finite-agent equivalences: $\rho_n^{\EFX}=1/\PoA_n^{\mathrm{cov}}$ and $\rho_n^{\PMMS}=1/\Gamma_n$.
\item We prove the optimality of Sylvester recursion for homogeneous layered partitions.
\end{enumerate}

\begin{figure}[t]
\centering
\includegraphics[width=.78\textwidth]{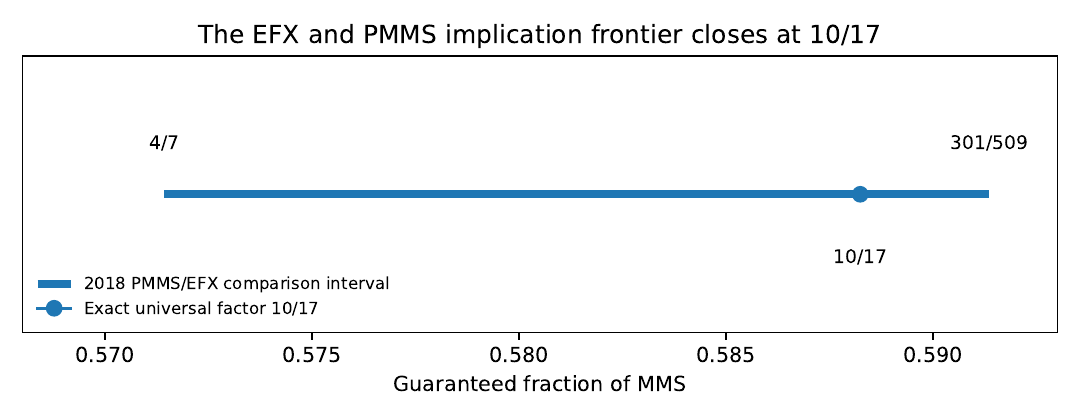}
\caption{The comparison between EFX/PMMS and MMS. Both the $\EFX$-to-$\MMS$ and $\PMMS$-to-$\MMS$ universal constants equal $10/17 \approx 0.5882$.}
\label{fig:frontier}
\end{figure}

\section{Preliminaries}

Let $N=[n]=\{1,\ldots,n\}$ be a set of agents and let $M$ be a finite set of indivisible goods. Each agent $i\in N$ has a nonnegative additive valuation $v_i:2^M\to\mathbb R_{\ge0}$, satisfying $v_i(S)=\sum_{g\in S}v_i(g)$ for $S\subseteq M$, with $v_i(\varnothing)=0$. An allocation $A=(A_1,\ldots,A_n)$ is a partition of $M$, where $A_i$ is the bundle allocated to agent $i$.

\begin{definition}[$\EFX$, $\EFX_0$, and $\alpha$-$\EFX$]
An allocation $A=(A_1,\ldots,A_n)$ is \emph{envy-free up to any good} ($\EFX$) if for every $i\ne j$ and every $g\in A_j$ with $v_i(g)>0$,
\[
v_i(A_i)\ge v_i(A_j\setminus\{g\}).
\]
It is $\EFX_0$ if the inequality holds for all $g\in A_j$, including goods of value $v_i(g)=0$.
For $\alpha\in(0,1]$, the allocation is $\alpha$-$\EFX$ if $v_i(A_i)\ge \alpha\,v_i(A_j\setminus\{g\})$ for every $i\ne j$ and every $g\in A_j$ with $v_i(g)>0$.
\end{definition}

\begin{definition}[$\MMS$ and $\PMMS$]
Let $\Pi_k(S)$ denote the set of all $k$-partitions of $S\subseteq M$. For an integer $k\ge1$, define
\[
\mu_i(k,S)=\max_{(P_1,\ldots,P_k)\in\Pi_k(S)}\min_{r\in[k]}v_i(P_r).
\]
The \emph{maximin share} of agent $i$ is $\MMS_i := \mu_i(n,M)$. An allocation $A$ is $\beta$-$\MMS$ if $v_i(A_i)\ge\beta\MMS_i$ for every $i\in N$.

An allocation $A$ is \emph{pairwise maximin share} ($\PMMS$) if for every pair $i,j\in N$,
\[
v_i(A_i)\ge \mu_i(2,A_i\cup A_j).
\]
\end{definition}

\paragraph{Notation and reductions.}
Throughout the lower-bound proofs, we fix a focal agent $i$, write $s=v_i(A_i)$, and normalize $s=1$. We refer to $A_j$ ($j\ne i$) as a foreign bundle. The following reduction shows that removing singleton or empty bundles does not decrease the focal agent's $\MMS$.

\begin{lemma}[Reduction lemma]\label{lem:reduction}
For any additive valuation $v_i$, any good $g\in M$, and any $n\ge2$,
\[
\mu_i(n-1,M\setminus\{g\})\ge \mu_i(n,M).
\]
Removing an empty bundle and one nonfocal agent also does not decrease the focal agent's $\MMS$.
\end{lemma}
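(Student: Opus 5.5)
The plan is the standard merge-and-delete argument on an optimal $n$-partition. Fix an $n$-partition $(P_1,\ldots,P_n)\in\Pi_n(M)$ attaining $\mu_i(n,M)$, so that $v_i(P_r)\ge \mu_i(n,M)$ for every $r$. The good $g$ lies in exactly one part; relabel so that $g\in P_n$. We then build an $(n-1)$-partition of $M\setminus\{g\}$ by deleting $g$ and absorbing the rest of its part elsewhere. Concretely, we take
\[
Q_r = P_r \ (1\le r\le n-2), \qquad Q_{n-1} = P_{n-1}\cup \bigl(P_n\setminus\{g\}\bigr).
\]
This is a partition of $M\setminus\{g\}$ into $n-1$ parts, which needs $n\ge2$.

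Next we check the values using nonnegativity and additivity. For $r\le n-2$ we have $v_i(Q_r)=v_i(P_r)\ge\mu_i(n,M)$. For the merged part,
\[
v_i(Q_{n-1}) = v_i(P_{n-1})+v_i(P_n\setminus\{g\}) \ge v_i(P_{n-1}) \ge \mu_i(n,M).
\]
So $\min_r v_i(Q_r)\ge \mu_i(n,M)$, and maximizing over $(n-1)$-partitions gives the claim. Empty parts are harmless, since $\Pi_k$ allows empty parts (if it did not, $\mu$ would be defined with that convention and the merge still yields a valid partition).

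For the second sentence, take a nonfocal agent $j$ with $A_j=\varnothing$. Removing $j$ leaves the goods $M$ unchanged and reduces the number of agents to $n-1$. The focal agent's new $\MMS$ is $\mu_i(n-1,M)$. The same merge argument with no deletion applies: merge $P_{n-1}$ and $P_n$, which gives $\mu_i(n-1,M)\ge\mu_i(n,M)$. In the singleton case $A_j=\{g\}$, removing agent $j$ together with $g$ is exactly the first inequality. In both cases the remaining bundles still form an allocation among $n-1$ agents, and $v_i(A_i)$ is unchanged.

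I expect no real obstacle. The only points needing care are the trivial edge cases: $n=2$, where $Q_1=M\setminus\{g\}$, and the allowance of empty parts in $\Pi_k$. The key ingredient is nonnegativity of $v_i$, which guarantees that merging never decreases a part's value.
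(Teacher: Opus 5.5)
Your proposal is correct and is essentially the paper's argument: delete $g$ from the part of an optimal $n$-partition containing it and absorb the remainder of that part into another part, using nonnegativity so that no surviving part loses value. You also spell out the empty-bundle case (merge two parts, so $\mu_i(n-1,M)\ge\mu_i(n,M)$), which the paper's proof leaves implicit.
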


\begin{proof}
Let $(T_1,\ldots,T_n)$ be an optimal $n$-partition achieving $\MMS_i$, with $g\in T_n$. Discarding $g$ and part $T_n$, and distributing $T_n\setminus\{g\}$ arbitrarily among $T_1,\ldots,T_{n-1}$, gives an $(n-1)$-partition of $M\setminus\{g\}$ where no remaining part loses value. Hence the new minimum is at least $\mu_i(n,M)$.
\end{proof}

\section{The Combinatorial Charging Theorem}\label{sec:direct}

In this section, we prove that every $\EFX$ and every $\PMMS$ allocation is $10/17$-$\MMS$.

\subsection{The Weight Function and Its Properties}

\begin{definition}[Weight function]\label{def:weight}
Define $w:[0,1]\to[0,1/2]$ by
\begin{equation}\label{eq:weight}
w(x)=
\begin{cases}
\dfrac{x}{1+x},&0\le x\le\dfrac12,\\[8pt]
\dfrac{x}{2-x},&\dfrac12<x\le\dfrac23,\\[8pt]
\dfrac12,&\dfrac23<x\le1.
\end{cases}
\end{equation}
\end{definition}

The function $w$ is continuous, nondecreasing, satisfies $w(x)<x$ for $x>0$, and for $x\in[0,2/3]$,
\begin{equation}\label{eq:two-thirds-weight}
w(x)\ge\frac23x.
\end{equation}

\begin{lemma}[Weight of a locally constrained bundle]\label{lem:bundleweight}
Let $B$ be a finite multiset of at least two positive numbers in $(0,1]$. If
\begin{equation}\label{eq:residualbundle}
\sum_{x\in B}x-\min B\le1,
\end{equation}
then $\sum_{x\in B}w(x)\le1$.
\end{lemma}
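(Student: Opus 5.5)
The plan is to split off the minimum element and compare $w$ against a single linear function of $x$ whose slope depends on that minimum. Write $m=\min B$ and $R=B\setminus\{m\}$, removing one copy of $m$. Then $R$ is nonempty, every $x\in R$ satisfies $x\ge m$, and by \eqref{eq:residualbundle} $\sum_{x\in R}x\le 1$. The key pointwise inequality I aim for is
\[
w(x)\le \frac{x}{1+m}\qquad\text{for every } x\in R .
\]
Combined with $w(m)=m/(1+m)$ when $m\le 1/2$, it gives
\[
\sum_{x\in B}w(x)\le \frac{m}{1+m}+\frac{\sum_{x\in R}x}{1+m}\le \frac{m+1}{1+m}=1 .
\]

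\textbf{Step 1 (trivial cases).} If $|R|=1$, then $B$ has exactly two elements, and $\sum w\le 1/2+1/2=1$ because $w\le 1/2$. If $m>1/2$, every element of $R$ exceeds $1/2$ while $\sum_R x\le 1$, so $|R|=1$ and we are back in the previous case. Hence from now on I assume $|R|\ge 2$ and $m\le 1/2$. Then every $x\in R$ satisfies $x\le 1-m$, since some other element of $R$ is at least $m$ and $\sum_R x\le 1$.

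\textbf{Step 2 (the pointwise bound, by ranges of $x$).}
\begin{itemize}[leftmargin=2em]
\item For $m\le x\le 1/2$: $w(x)=x/(1+x)\le x/(1+m)$, since $x\ge m$.
\item For $1/2<x\le 2/3$: $w(x)=x/(2-x)$, and $2-x\ge 1+m$ is exactly $x\le 1-m$, which Step 1 provides.
\item For $x>2/3$: here the linear bound can fail, so I treat this case separately. At most one element of $R$ exceeds $2/3$, since two such elements would sum to more than $1$.
\end{itemize}

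\textbf{Step 3 (one large element).} Suppose $x^*\in R$ with $x^*>2/3$, and let $R'=R\setminus\{x^*\}$, which is nonempty. Then $\sum_{R'}x\le 1-x^*<1/3$, so $m<1/3$, and all elements of $R'$ lie in $[m,1/3)$, where Step 2 applies. Hence
\[
\sum_{R'}w+w(m)\le \frac{1-x^*+m}{1+m}.
\]
This is at most $1/2$ iff $m\le 2x^*-1$. That holds because $m\le 1-x^*\le 2x^*-1$, where the second inequality uses $x^*\ge 2/3$. Adding $w(x^*)=1/2$ gives the total bound $\le 1$.

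I expect Step 3 to be the main obstacle. It is the only place where the slope-$1/(1+m)$ comparison breaks, because the flat piece $w=1/2$ is too large for $x$ near $2/3$ when $m$ is small. So it needs its own budget argument, and the chain $m\le 1-x^*\le 2x^*-1$ should be checked carefully at the boundary $x^*=2/3$. The rest consists of routine monotonicity checks on the three pieces of \eqref{eq:weight}.
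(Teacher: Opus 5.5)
Your proof is correct. It uses the same linear majorant as the paper, $w(x)\le x/(1+\min B)$, but splits the cases differently.

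\textbf{The paper's route.} It splits on the size of the minimum $p$:
\begin{itemize}
\item For $p\le 1/3$, the majorant holds on all of $B$ essentially for free. The middle branch needs $x+p\le 1$, which follows from $x\le 2/3$ and $p\le 1/3$. The flat branch needs $1+p\le 2x$, which follows from $p\le 1/3<2x-1$.
\item For $p>1/2$, there are only two elements.
\item For $1/3<p\le 1/2$, it does a separate explicit computation on three-element bundles.
\end{itemize}

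\textbf{Your route.} You split on $|B|$ instead. Your observation that $|B|\ge 3$ forces $x\le 1-m$ for every non-minimal element makes the middle branch work for every $m\le 1/2$. So the paper's third case is absorbed into one uniform argument. It also shows that the only real obstruction to the linear bound is the two-element case, where $w\le 1/2$ suffices.

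\textbf{A correction to your commentary.} Step 3 is not really a separate budget argument. The inequality $(1-x^*+m)/(1+m)\le 1/2$ is equivalent to $1+m\le 2x^*$, which is exactly $w(x^*)=1/2\le x^*/(1+m)$. The flat piece violates the linear bound only when $m>2x^*-1>1/3$. That is large $m$, not small $m$ as you wrote. You have already shown that $x^*>2/3$ forces $m<1/3$ once $|B|\ge 3$. Hence the pointwise bound holds on all of $R$, and Step 3 can be replaced by that one line, as in the paper's first case.
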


\begin{proof}
Let $p=\min B$.

\emph{Case 1: $p\le1/3$.}
For all $x\in B$, we claim $w(x)\le x/(1+p)$:
\begin{itemize}[nosep]
\item If $x\le1/2$, $w(x)=x/(1+x)\le x/(1+p)$ since $x\ge p$.
\item If $1/2<x\le2/3$, condition~\eqref{eq:residualbundle} gives $x+p\le1$, so $2-x\ge1+p$ and $w(x)=x/(2-x)\le x/(1+p)$.
\item If $x>2/3$, $2x\ge4/3\ge1+p$, so $w(x)=1/2\le x/(1+p)$.
\end{itemize}
Summing over $x\in B$ gives $\sum_{x\in B}w(x)\le \sum_{x\in B}x/(1+p)\le (1+p)/(1+p)=1$.

\emph{Case 2: $p>1/2$.}
The bundle contains at most two elements, because three elements larger than $1/2$ would violate~\eqref{eq:residualbundle}. Since $w(x)\le 1/2$, the sum is at most $1/2+1/2=1$.

\emph{Case 3: $1/3<p\le1/2$.}
Condition~\eqref{eq:residualbundle} implies $|B|\le 3$. The case $|B|=2$ is immediate. For $|B|=3$, let $B=\{p, x, y\}$ with $p\le x\le y$. Then $x+y\le1$. If $y\le1/2$, all elements lie on the first branch and $w(p)+w(x)+w(y)\le 1$. If $y>1/2$, then $x\le 1/2$ and $y<2/3$. By monotonicity,
\[
w(x)+w(y) \le \frac{x}{1+x}+\frac{1-x}{1+x}=\frac{1}{1+x}\le\frac{1}{1+p}.
\]
Adding $w(p)=p/(1+p)$ gives total weight at most $1$.
\end{proof}

\begin{lemma}[Weight of a high-value set]\label{lem:highweight}
Let $S$ be a finite multiset of numbers in $[0,1]$. If $\sum_{x\in S}x\ge 17/10$, then $\sum_{x\in S}w(x)\ge1$.
\end{lemma}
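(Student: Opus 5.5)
The plan is to split on how many elements of $S$ lie in the top branch $(2/3,1]$, where $w\equiv 1/2$. Let $k$ be that number. Since $w$ is nondecreasing, I may shrink elements without increasing the total weight. So whenever convenient I assume $\sum_{x\in S}x$ is exactly the relevant threshold.

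\emph{Case $k\ge2$.} Two elements each contribute $1/2$, so the total weight is at least $1$ immediately.

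\emph{Case $k=0$.} Every element lies in $[0,2/3]$, so inequality~\eqref{eq:two-thirds-weight} gives
\[
\sum_{x\in S} w(x)\ \ge\ \frac23\sum_{x\in S}x\ \ge\ \frac23\cdot\frac{17}{10}=\frac{17}{15}>1 .
\]

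\emph{Case $k=1$.} This is the main case and the place where the constant $17/10$ is tight. The large element contributes $1/2$ and has value at most $1$. The remaining multiset $R\subseteq[0,2/3]$ therefore has $\sum_{x\in R}x\ge 7/10$, and I must show $\sum_{x\in R}w(x)\ge 1/2$. I split on the elements of $R$ lying in $(1/2,2/3]$.
\begin{itemize}
\item If $R$ has two or more such elements, each has weight greater than $w(1/2)=1/3$, so the total exceeds $2/3$.
\item If $R$ has exactly one such element $y$, the other elements lie in $[0,1/2]$ and sum to at least $z:=7/10-y$ (when $z\le 0$ nothing more is needed, since then $y\ge 7/10>2/3$, which is impossible). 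The function $x/(1+x)$ is concave on $[0,1/2]$ with value $0$ at $0$, hence subadditive. So the small elements have weight at least $w(z)=z/(1+z)$, and $z\le1/5$ stays on the first branch. It remains to show $\phi(y)=\frac{y}{2-y}+\frac{0.7-y}{1.7-y}\ge\frac12$ on $(1/2,2/3]$. Its derivative is $\frac{2}{(2-y)^2}-\frac{1}{(1.7-y)^2}$. This is positive on the interval because $\sqrt2\,(1.7-y)>2-y$ for $y\le 2/3$. Hence $\phi(y)\ge\phi(1/2)=\frac13+\frac16=\frac12$.
\item If all elements of $R$ lie in $[0,1/2]$, I minimize the concave function $\sum x/(1+x)$ over the polytope $\{0\le x\le 1/2,\ \sum x = 7/10\}$. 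A concave function attains its minimum at a vertex, where at most one coordinate is strictly between its bounds. So the minimizer is $\{1/2,1/5\}$, with weight $1/3+1/6=1/2$.
\end{itemize}

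The main obstacle is the $k=1$ case. The global factor $2/3$ is too weak there (it gives only $7/15$), and $w$ is not concave across the kink at $1/2$. The fix is to treat the at most one element in $(1/2,2/3]$ separately and to use concavity only on the first branch. The extremal configuration $\{1,1/2,1/5\}$ must appear as an equality case in the argument, which serves as a consistency check.
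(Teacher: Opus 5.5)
Your proof is correct and follows the paper's argument essentially step for step. It uses the same split on the number of elements above $2/3$, the same reduction to showing that a remainder of value at least $7/10$ in $[0,2/3]$ has weight at least $1/2$, the same extremal pair $\{1/2,1/5\}$, and the same bound $\frac{y}{2-y}+\frac{7/10-y}{17/10-y}\ge\frac12$ on $(1/2,2/3]$. The differences are cosmetic: you check the last inequality by a derivative instead of the paper's factorization $\frac{(2q-1)(6-5q)}{2(2-q)(17-10q)}$, and you merge small elements by subadditivity from concavity instead of the paper's explicit identity. Your separate treatment of two or more elements in $(1/2,2/3]$ fills a case the paper passes over when it asserts that the remaining values lie on the first branch.
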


\begin{proof}
If every $x\in S$ satisfies $x\le2/3$, then by~\eqref{eq:two-thirds-weight}, $\sum_{x\in S}w(x)\ge (2/3)(17/10) = 17/15 > 1$. If at least two elements exceed $2/3$, their weights already sum to at least $1/2+1/2=1$.

Now suppose exactly one element $z\in S$ exceeds $2/3$, contributing weight $1/2$. The remaining elements $S'=S\setminus\{z\}$ sum to at least $17/10-z\ge 7/10$. It suffices to show that numbers in $[0,2/3]$ summing to $7/10$ have weight at least $1/2$. Let $q=\max_{x\in S'} x$.

\begin{itemize}[nosep]
\item If $q\le1/2$, all values lie on the concave function $f(x)=x/(1+x)$. Subject to sum $7/10$ and cap $1/2$, the sum is minimized at $1/2$ and $1/5$:
\[
w(1/2)+w(1/5)=\frac{1/3}+\frac16=\frac12.
\]
\item If $q\in(1/2, 2/3]$, let $t=7/10-q < 1/5$. The remaining values lie on the first branch. For $a,b\ge0$,
\[
\frac{a}{1+a}+\frac{b}{1+b}-\frac{a+b}{1+a+b} = \frac{ab(2+a+b)}{(1+a)(1+b)(1+a+b)}\ge0.
\]
Thus merging values minimizes weight, yielding:
\[
\sum_{x\in S'}w(x)\ge \frac{q}{2-q}+\frac{7/10-q}{17/10-q} \ge \frac12,
\]
since $\frac{q}{2-q}+\frac{7/10-q}{17/10-q}-\frac12 = \frac{(2q-1)(6-5q)}{2(2-q)(17-10q)}\ge0$ for $q\in(1/2, 2/3]$.
\end{itemize}
\end{proof}

\begin{remark}
The multiset $\{1, 1/2, 1/5\}$ has sum $17/10$ and weight $w(1)+w(1/2)+w(1/5)=1/2+1/3+1/6=1$. Thus the constant $17/10$ is exact for $w$.
\end{remark}

\subsection{The EFX and PMMS Lower Bounds}

\begin{theorem}[EFX lower bound]\label{thm:efx-direct}
Under nonnegative additive valuations, every $\EFX$ allocation is a $(10/17)$-$\MMS$ allocation.
\end{theorem}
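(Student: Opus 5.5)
Fix a focal agent $i$ in an $\EFX$ allocation $A$, and write $v=v_i$ and $s=v(A_i)$. The goal is to show $s\ge \tfrac{10}{17}\MMS_i$. The approach is by contradiction: assume $\MMS_i>\tfrac{17}{10}s$, reduce to a clean instance, and then compare total weights under $w$.

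\textbf{Step 1: trivial case.} If $s=0$, then $\EFX$ says that every foreign bundle $A_j$ with at least two positively valued goods has value $0$ after removing one of them, which is impossible. So every foreign bundle contains at most one positively valued good. At most $n-1$ goods then have positive value, and in any $n$-partition some part is worthless, so $\MMS_i=0$. Assume from now on that $s>0$ and normalize $s=1$.

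\textbf{Step 2: reduction.}
\begin{itemize}
\item Delete all goods with $v(g)=0$. This does not change $\MMS_i$, and the $\EFX$ condition only involves positively valued goods, so it still holds.
\item Remove a foreign bundle that is now empty, together with its agent. By Lemma~\ref{lem:reduction}, the $\MMS$ does not decrease.
\item Remove a foreign bundle that is a singleton $\{g\}$, together with its agent. By Lemma~\ref{lem:reduction}, $\mu(n-1,M\setminus\{g\})\ge\mu(n,M)$.
\end{itemize}
Iterate these removals. The focal bundle and its value $1$ are untouched, and the remaining allocation is still $\EFX$ for $i$. If only agent $i$ remains, then $\MMS_i=v(A_i)=1$ and we are done.

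After the reduction, every foreign bundle $B$ has at least two positive goods. $\EFX$ gives $v(B)-\min_{g\in B}v(g)\le 1$. Since $|B|\ge2$, each good in $B$ has value at most $v(B)-\min B\le 1$. Every good in $A_i$ also has value at most $1$. So all good values lie in $(0,1]$, the domain of $w$, and Lemma~\ref{lem:bundleweight} applies to every foreign bundle, giving weight at most $1$.

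\textbf{Step 3: weight count.} For the focal bundle, $w(x)<x$ for $x>0$ gives $\sum_{g\in A_i}w(v(g))<1$ whenever $A_i$ is nonempty. Since $s=1>0$, $A_i$ is nonempty. With $n'$ agents remaining, the total weight $W$ of all goods satisfies $W<n'$.

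Now suppose $\MMS_i>17/10$ in the reduced instance. Then there is an $n'$-partition whose parts each have value at least $17/10$. By Lemma~\ref{lem:highweight}, each part has weight at least $1$, so $W\ge n'$, a contradiction. Hence the reduced $\MMS$ is at most $17/10$, and the original $\MMS_i$ is at most this. That gives $v(A_i)\ge\tfrac{10}{17}\MMS_i$.

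\textbf{Main obstacle.} The delicate step is the bookkeeping in the reduction. Three things must hold together:
\begin{itemize}
\item Removing a singleton bundle removes exactly one good and one agent, so Lemma~\ref{lem:reduction} applies directly.
\item The strict inequality for the focal bundle needs $A_i$ nonempty, which the normalization $s=1$ supplies.
\item Every value must lie in $(0,1]$. For foreign bundles this uses $|B|\ge2$; for focal goods it is immediate.
\end{itemize}
The weight inequalities themselves are already supplied by Lemmas~\ref{lem:bundleweight} and~\ref{lem:highweight}.
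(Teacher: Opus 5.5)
Your proof is correct and follows essentially the same route as the paper. You dispose of $s=0$, delete zero-valued goods and empty or singleton foreign bundles via Lemma~\ref{lem:reduction}, and then compare two weight totals. The weight available is strictly below $n'$, by Lemma~\ref{lem:bundleweight} and $w(x)<x$; an $\MMS$ partition with all parts above $17/10$ would force weight at least $n'$, by Lemma~\ref{lem:highweight}. You also add checks the paper leaves implicit, and they are handled correctly: all values lie in $(0,1]$ so both lemmas apply, $A_i$ is nonempty, and the case where only agent $i$ remains is covered.
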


\begin{proof}
Let $A=(A_1,\ldots,A_n)$ be an $\EFX$ allocation and fix agent $i$. If $s=v_i(A_i)=0$, each foreign bundle contains at most one good of positive value to $i$, so $\MMS_i=0$ and the claim holds.

Assume $s>0$. Remove all goods with $v_i(g)=0$, and iteratively remove nonfocal agents with singleton or empty bundles along with their goods. By Lemma~\ref{lem:reduction}, the focal $\MMS$ in the reduced instance satisfies $\mu_i'\ge\MMS_i$. Let $n'$ be the remaining number of agents; each foreign bundle now has at least two positive goods.

Normalize $v_i$ by $1/s$ so $v_i(A_i)=1$. For every foreign bundle $B$, $\EFX$ implies $v_i(B)-\min_{g\in B}v_i(g)\le1$. By Lemma~\ref{lem:bundleweight}, every foreign bundle has weight at most $1$. The focal bundle $A_i$ has value $1$ and, since $w(x)<x$ for $x>0$, has total weight strictly below $1$. Thus the total weight of all goods is strictly below $n'$.

If $s < (10/17)\MMS_i$, the normalized $\MMS$ satisfies $\mu_i' > 17/10$. In an optimal $n'$-partition defining $\mu_i'$, every part has value at least $\mu_i' > 17/10$, so by Lemma~\ref{lem:highweight} each part has weight at least $1$. The total weight would be at least $n'$, a contradiction. Thus $s\ge(10/17)\MMS_i$.
\end{proof}

\begin{lemma}[Local inequality from PMMS]\label{lem:pmms-local}
Let $v$ be an additive valuation and let $A_i, B$ be disjoint bundles with $v(A_i)=1$. If $B$ contains at least two positive goods and $\mu(2, A_i\cup B)\le1$, then
\[
v(B)-\min_{g\in B}v(g)\le1.
\]
\end{lemma}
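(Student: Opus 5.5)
The plan is to exhibit one explicit bipartition of $A_i\cup B$ and use it as a witness in the definition of $\mu(2,A_i\cup B)$. Since $\mu(2,\cdot)$ is a maximum over all $2$-partitions, every particular $2$-partition has minimum part value at most $\mu(2,A_i\cup B)\le 1$. Choosing the partition well then turns this into the desired residual inequality.

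\emph{Reading of the statement.} I read $\min_{g\in B}v(g)$ as the least value among the goods of $B$, all of which are positive. This matches how the lemma is used: the reduction preceding Theorem~\ref{thm:efx-direct} has already removed all goods of value $0$ to $i$. This reading is needed for the lemma to hold. For example, with $v(A_i)=1$ and $B=\{0,0.9,0.9\}$, we have $\mu(2,A_i\cup B)=1$, yet $v(B)-\min_{g\in B}v(g)=1.8$. Equivalently, one can take the minimum over the positive goods of $B$ and treat zero-valued goods as absent.

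\emph{Key step.} Let $g^\ast\in B$ attain $p=\min_{g\in B}v(g)>0$, and consider the $2$-partition
\[
P_1=B\setminus\{g^\ast\},\qquad P_2=A_i\cup\{g^\ast\}.
\]
Both parts are nonempty because $B$ has at least two goods. By additivity and disjointness of $A_i$ and $B$,
\[
v(P_1)=v(B)-p,\qquad v(P_2)=1+p.
\]
Then
\[
\min\{v(B)-p,\;1+p\}\;\le\;\mu(2,A_i\cup B)\;\le\;1 .
\]
Since $p>0$, we have $1+p>1$, so the minimum cannot be attained by $P_2$. Hence $v(B)-p\le 1$, which is exactly the claim.

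\emph{Main obstacle.} There is essentially no technical difficulty; the one point needing care is the one above. Positivity of the removed good is what forces the inequality onto $P_1$. This is why zero-valued goods must be excluded, or the minimum taken over positive goods, and why the lemma is applied only after the reduction. The hypothesis of two positive goods serves only to make $P_1$ a genuine nonempty part, and it matches the reduced setting in which singleton foreign bundles have been removed. With Lemma~\ref{lem:pmms-local} in hand, the PMMS lower bound follows verbatim from the proof of Theorem~\ref{thm:efx-direct}, replacing the EFX inequality by this lemma.
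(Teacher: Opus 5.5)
Your proposal is correct and is essentially the paper's argument: the witness bipartition $B\setminus\{g^\ast\}$ versus $A_i\cup\{g^\ast\}$, where $v(A_i\cup\{g^\ast\})=1+v(g^\ast)>1$ forces $v(B)-\min_{g\in B}v(g)\le 1$. Your caveat about zero-valued goods is well taken. The paper's proof uses $v(g^\ast)>0$ implicitly, and this is guaranteed only by the earlier removal of zero-valued goods. Your counterexample shows the literal statement needs this, provided $A_i$ is taken to be a single good of value $1$; with $A_i=\{0.5,0.5\}$, for instance, $\mu(2,A_i\cup B)=1.4$ and the hypothesis fails.
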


\begin{proof}
Let $g^*\in B$ have minimum value in $B$. If $v(B\setminus\{g^*\})>1$, the bipartition $P_1 = B\setminus\{g^*\}$ and $P_2 = A_i\cup\{g^*\}$ satisfies $\min\{v(P_1),v(P_2)\} > 1$, contradicting $\mu(2,A_i\cup B)\le1$.
\end{proof}

\begin{theorem}[PMMS lower bound]\label{thm:pmms-direct}
Under nonnegative additive valuations, every $\PMMS$ allocation is a $(10/17)$-$\MMS$ allocation.
\end{theorem}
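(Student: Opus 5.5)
The plan is to mirror the proof of Theorem~\ref{thm:efx-direct}, with Lemma~\ref{lem:pmms-local} replacing the $\EFX$ condition as the source of the local inequality~\eqref{eq:residualbundle}. Fix a $\PMMS$ allocation $A$ and a focal agent $i$, and set $s=v_i(A_i)$.

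\emph{The case $s=0$.} Here $\PMMS$ gives $\mu_i(2,A_i\cup A_j)=0$ for every $j$. This forces each foreign bundle $A_j$ to contain at most one good of positive value to $i$. Otherwise we could split two positive goods of $A_j$ into two parts, giving a bipartition of $A_i\cup A_j$ with positive minimum. With at most $n-1$ positive goods in total, some part of any $n$-partition has value zero, so $\MMS_i=0$.

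\emph{Reduction for $s>0$.}
\begin{itemize}
\item Discard all goods with $v_i(g)=0$. This changes neither $v_i(A_i)$ nor $\MMS_i$, and it does not increase $\mu_i(2,\cdot)$, since zero goods can be reinserted anywhere. So $\PMMS$ with respect to $i$ is preserved.
\item Iteratively delete nonfocal agents whose bundles are empty or singletons, together with their goods. Each deletion removes at most one good, so by Lemma~\ref{lem:reduction} the focal maximin share $\mu_i'$ of the reduced instance on $n'$ agents satisfies $\mu_i'\ge\MMS_i$.
\item The pairwise conditions between $i$ and the surviving agents are untouched, because they involve only $A_i\cup A_j$.
\item Normalize so that $v_i(A_i)=1$.
\end{itemize}
Every surviving foreign bundle $B$ now has at least two positive goods and satisfies $\mu_i(2,A_i\cup B)\le 1$. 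Lemma~\ref{lem:pmms-local} therefore gives $v_i(B)-\min_{g\in B}v_i(g)\le 1$.

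\emph{Applying Lemma~\ref{lem:bundleweight}.} That lemma needs all values in $(0,1]$. I will show no good $g$ in a foreign bundle $B$ has $v_i(g)>1$. Suppose it did. Since $|B|\ge 2$, the bipartition $\{g\}$ versus $A_i\cup(B\setminus\{g\})$ has both sides of value greater than $1$, because $B\setminus\{g\}$ contains a positive good. This contradicts $\PMMS$. The focal bundle's goods lie in $(0,1]$ trivially. Lemma~\ref{lem:bundleweight} then bounds each foreign bundle's weight by $1$, and since $w(x)<x$, the focal bundle has weight strictly less than $1$. Hence the total weight is strictly below $n'$.

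\emph{Conclusion.} Suppose $s<(10/17)\MMS_i$. Then $\mu_i'>17/10$, so every part of an optimal $n'$-partition has value above $17/10$. Applying Lemma~\ref{lem:highweight} requires values in $[0,1]$, and all goods now lie in $(0,1]$, so each part has weight at least $1$. The total weight is then at least $n'$, which is a contradiction.

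The main obstacle is checking that the reductions preserve exactly what is needed: the boundedness of goods by the focal value, and the hypotheses of Lemma~\ref{lem:pmms-local}. The value bound is handled by the singleton-versus-rest bipartition above. The other point is that deleting agents does not disturb the pairwise conditions with agent $i$, which holds because each pairwise condition depends only on $A_i$ and one other bundle.
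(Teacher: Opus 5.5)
Your proposal is correct and follows the paper's own proof. It applies the same reductions and normalization as in the EFX theorem, uses Lemma~\ref{lem:pmms-local} to obtain the local inequality on every foreign bundle, and finishes with the identical weight-charging contradiction via Lemmas~\ref{lem:bundleweight} and~\ref{lem:highweight}. Your separate bipartition argument that every foreign good has value at most $1$ is valid but redundant: since $|B|\ge2$, the largest good of $B$ is already at most $v_i(B)-\min_{g\in B}v_i(g)\le 1$.
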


\begin{proof}
Apply the same reductions and normalization as in Theorem~\ref{thm:efx-direct}. By Lemma~\ref{lem:pmms-local}, every foreign bundle satisfies $v_i(B)-\min_{g\in B}v_i(g)\le1$. The charging argument applies identically, yielding $v_i(A_i)\ge(10/17)\MMS_i$.
\end{proof}

\begin{theorem}[$\alpha$-EFX bound]\label{thm:alpha}
For every $\alpha\in(0,1]$, every $\alpha\text{-}\EFX$ allocation is a $(10\alpha/17)\text{-}\MMS$ allocation.
\end{theorem}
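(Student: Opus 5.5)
The plan is to reduce Theorem~\ref{thm:alpha} to exactly the charging argument of Theorem~\ref{thm:efx-direct} by a different normalization. Fix agent $i$ and write $s=v_i(A_i)$. If $s=0$, then since $\alpha>0$ the $\alpha$-$\EFX$ condition $0\ge\alpha\,v_i(A_j\setminus\{g\})$ forces every foreign bundle to contain at most one good of positive value to $i$. Hence $\MMS_i=0$, as in the $\EFX$ proof. If $s>0$, I would first discard zero-valued goods and then iteratively remove nonfocal agents whose bundles are empty or singletons. Lemma~\ref{lem:reduction} guarantees that the reduced maximin share $\mu_i'$ is at least $\MMS_i$. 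Each of the remaining $n'-1$ foreign bundles has at least two positive goods.

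The key step is to rescale by $\alpha/s$ instead of $1/s$: set $v'=(\alpha/s)\,v_i$. The focal bundle then has $v'(A_i)=\alpha\le1$. The $\alpha$-$\EFX$ inequality $s\ge\alpha\,v_i(B\setminus\{g\})$, applied with $g$ a minimum good of $B$, becomes $v'(B)-\min_{g\in B}v'(g)\le1$ for every foreign bundle $B$. This is exactly condition~\eqref{eq:residualbundle}.

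The point needing care is that Lemmas~\ref{lem:bundleweight} and~\ref{lem:highweight} require every good to lie in $[0,1]$, since $w$ is only defined there. I would verify this directly:
\begin{itemize}[nosep]
\item Every good of $A_i$ has $v'$-value at most $\alpha\le1$.
\item In a foreign bundle with $|B|\ge2$, every good other than a chosen minimum good lies in $B\setminus\{g^*\}$, so its value is at most $1$.
\item The minimum good itself is at most any other good of $B$, so it is also at most $1$.
\end{itemize}
Thus all goods lie in $(0,1]$. Lemma~\ref{lem:bundleweight} then gives weight at most $1$ for each foreign bundle. The focal bundle has weight strictly less than $v'(A_i)=\alpha\le1$, because $w(x)<x$ for $x>0$. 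Hence the total weight of all goods is strictly less than $n'$.

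Now suppose $s<(10\alpha/17)\MMS_i$. The $v'$-maximin share is $(\alpha/s)\mu_i'\ge(\alpha/s)\MMS_i>17/10$. So an optimal $n'$-partition for $v'$ has every part of value greater than $17/10$. Each part consists of goods in $[0,1]$, so by Lemma~\ref{lem:highweight} each part has weight at least $1$, and the total weight is at least $n'$. This contradicts the previous paragraph, so $s\ge(10\alpha/17)\MMS_i$.

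I expect no real obstacle beyond the range check that keeps every good in $[0,1]$ after rescaling. That check is exactly where it matters that we normalize the focal bundle to $\alpha$ rather than to $1$, and that singleton foreign bundles have already been removed.
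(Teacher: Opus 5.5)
Your proposal is correct and follows the paper's proof: rescale the focal valuation by $\alpha/s$ so the focal bundle has value $\alpha\le 1$ and every foreign bundle satisfies the residual condition. Then rerun the charging argument of Theorem~\ref{thm:efx-direct}. The paper's short proof leaves several details implicit, and you fill them in: the $s=0$ case, the reductions, and the check that all rescaled goods lie in $(0,1]$, which is needed for Lemmas~\ref{lem:bundleweight} and~\ref{lem:highweight} to apply.
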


\begin{proof}
Scale the focal valuation by $\alpha/s$, setting the focal bundle value to $\alpha$. Foreign bundles satisfy $v(B\setminus\{g\})\le1$ and have weight at most $1$, while the focal bundle has weight below $\alpha\le1$. If $s<(10\alpha/17)\MMS_i$, the normalized $\MMS$ exceeds $17/10$, giving the same contradiction.
\end{proof}

\section{An Explicit Family Approaching 10/17}\label{sec:tight}

We now construct an explicit family showing that $10/17$ is tight for both $\PMMS$ and $\EFX_0$.

\subsection{Block Construction and Dual Partitions}

For an integer $\ell\ge1$, define:
\begin{equation}\label{eq:hi}
h_i=\frac{2(4^i-1)}{3},
\qquad
\delta_\ell=\frac1{30h_\ell}=\frac1{20(4^\ell-1)},
\qquad
m_\ell=2(10^\ell-1).
\end{equation}
For $i\in\{1,\ldots,\ell\}$, define four item values:
\begin{align}
a_i&=\frac12+(h_i-1)\delta_\ell,
&b_i&=\frac12-(h_i-1)\delta_\ell,\label{eq:ab}\\
c_i&=\frac15+4h_{i-1}\delta_\ell,
&d_i&=\frac15-h_i\delta_\ell,\label{eq:cd}
\end{align}
with $h_0=0$. One \emph{block} consists of:
\begin{itemize}[nosep]
\item $m_\ell$ goods of value $1$;
\item $6\cdot10^{\ell-i}$ goods of value $a_i$ ($i=1,\ldots,\ell$);
\item $12\cdot10^{\ell-i}$ goods of value $b_i$ ($i=1,\ldots,\ell$);
\item $12\cdot10^{\ell-i}$ goods of value $c_i$ ($i=1,\ldots,\ell$);
\item $6\cdot10^{\ell-i}$ goods of value $d_i$ ($i=1,\ldots,\ell$).
\end{itemize}

These values satisfy $a_i+b_i=1$, $c_{i+1}+4d_i=1$ ($i<\ell$), $c_1=1/5$, $d_\ell=1/6$, and $a_i+d_i=b_i+c_i=7/10-\delta_\ell$.

Each block admits two exact partitions:
\begin{enumerate}[label=(\alph*),leftmargin=2em]
\item \textbf{Local Partition:} A partition into $m_\ell$ bundles shown in Table~\ref{tab:localpartition}, where each bundle satisfies $v(B)-\min_{g\in B}v(g)\le1$.
\item \textbf{Benchmark Partition:} Pairing every $a_i$ with a $d_i$, and every $b_i$ with a $c_i$, yields $m_\ell$ pairs of value $7/10-\delta_\ell$. Combining each pair with one unit-valued good gives $m_\ell$ bundles of value $C_\ell := 17/10-\delta_\ell$.
\end{enumerate}

\begin{table}[t]
\centering
\small
\begin{tabular}{@{}lll@{}}
\toprule
Bundle Type & Count & Value after Deleting a Minimum Good\\
\midrule
$\{1,1\}$ & $m_\ell/2$ & $1$\\
$\{a_i,b_i,b_i\}$ & $6\cdot10^{\ell-i}$ for each $i$ & $a_i+b_i=1$\\
$\{c_1,c_1,c_1,c_1,c_1,c_1\}$ & $2\cdot10^{\ell-1}$ & $5c_1=1$\\
$\{c_{i+1},d_i,d_i,d_i,d_i,d_i\}$ & $12\cdot10^{\ell-i-1}$ for $i<\ell$ & $c_{i+1}+4d_i=1$\\
$\{d_\ell,d_\ell,d_\ell,d_\ell,d_\ell,d_\ell\}$ & $1$ & $5/6$\\
\bottomrule
\end{tabular}
\caption{The local partition of one block into $m_\ell$ bundles.}
\label{tab:localpartition}
\end{table}

\subsection{Seven-Block Complete Allocation}

Take seven copies of the block (indexed $0,\ldots,6$) and one additional unit-valued good $z$, giving $N_\ell = 7m_\ell = 14(10^\ell-1)$ bundles:
\begin{itemize}[leftmargin=2em]
\item In block $0$, replace the six $d_\ell=1/6$ goods in its last bundle by $z$, creating the focal singleton bundle $\{z\}$ of value $1$.
\item Add one of the six displaced $1/6$ goods to the last bundle of each of the other six blocks, giving six bundles of seven $1/6$ goods and value $7/6$.
\item All other bundles remain as in Table~\ref{tab:localpartition}.
\end{itemize}

\begin{lemma}[Fairness of the construction]\label{lem:focal-tight}
The complete allocation is simultaneously $\EFX_0$ and $\PMMS$.
\end{lemma}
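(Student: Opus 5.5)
We would prove Lemma~\ref{lem:focal-tight} under identical valuations: every agent values each good by its listed numerical value. (If the intended instance uses non-identical valuations, this assumption has to be revisited.) The plan rests on two numerical facts about every bundle $A_j$ of the seven-block allocation.

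\emph{(F1)} $v(A_j)\ge1$ for every $j$. This is read off the bundle list:
\begin{itemize}[nosep]
\item $\{1,1\}$ has value $2$;
\item $\{a_i,b_i,b_i\}$ has value $1+b_i$;
\item the six copies of $c_1$ have value $6/5$;
\item $\{c_{i+1},d_i^{5}\}$ has value $1+d_i$;
\item the six bundles of seven $1/6$'s have value $7/6$;
\item $\{z\}$ has value $1$.
\end{itemize}
Block $0$ no longer has its $\{d_\ell^{6}\}$ bundle, so the value-$5/6$ bundle of Table~\ref{tab:localpartition} does not occur.

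\emph{(F2)} $v(A_j)-\min_{g\in A_j}v(g)\le1$ for every $j$. This is the last column of Table~\ref{tab:localpartition}, together with $7\cdot\frac16-\frac16=1$ for the enlarged bundles and the trivial case of $\{z\}$. For (F2) we must check which good is the minimum, using that $\delta_\ell$ is tiny:
\begin{itemize}[nosep]
\item $b_i\le a_i$;
\item $d_i\le c_{i+1}$, since $c_{i+1}=\frac15+4h_i\delta_\ell\ge\frac15-h_i\delta_\ell$;
\item all values are positive, since $h_\ell\delta_\ell=1/30$.
\end{itemize}

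\textbf{EFX$_0$.} It is then immediate. For any $i\ne j$ and any $g\in A_j$,
\[
v(A_j\setminus\{g\})\le v(A_j)-\min_{g'\in A_j} v(g')\le1\le v(A_i),
\]
by (F2) and (F1).

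\textbf{PMMS.} Fix $i\ne j$ and write $x=v(A_i)$, $y=v(A_j)$. We must show that no bipartition $(P,Q)$ of $A_i\cup A_j$ has $v(P),v(Q)>x$.
\begin{itemize}[nosep]
\item If $y\le x$, this holds since $\mu(2,A_i\cup A_j)\le (x+y)/2\le x$.
\item Otherwise, suppose such $P,Q$ exist. Neither part can contain all of $A_j$, since the complementary part would then be a subset of $A_i$ of value at most $x$. Hence by (F2), $v(P\cap A_j)\le1$ and $v(Q\cap A_j)\le1$. Therefore both $v(P\cap A_i)$ and $v(Q\cap A_i)$ exceed $x-1$.
\item If $A_i$ is a singleton (the focal $\{z\}$), this is impossible because one of the two intersections is empty. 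The same holds for $A_i=\{1,1\}$, where $x-1=1$ would force one side to receive both unit goods.
\end{itemize}

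For the remaining types of $A_i$ the excess $x-1$ is small: $x-1\in\{b_i,\ 1/5,\ d_i,\ 1/6\}$. Requiring each side to take from $A_i$ more than $x-1$ severely limits how $A_i$ can be split. For instance, for $\{a_i,b_i,b_i\}$ the only options are $\{a_i\}\mid\{b_i,b_i\}$ and $\{a_i,b_i\}\mid\{b_i\}$. For the six-good bundles, each side must receive enough small goods to beat $1/5$, $d_i$ or $1/6$. In each surviving split, each side $P$ then needs $v(P\cap A_j)>x-v(P\cap A_i)$, with both right-hand sides close to $1/2$ or close to $1-k/5$. We then check, against the finite list of possible $A_j$ types and their subset sums, that $A_j$ cannot be cut into two pieces each of value at most $1$ that meet both requirements. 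The perturbations $\pm h_i\delta_\ell$ and $4h_{i-1}\delta_\ell$ appear to be designed precisely so that the relevant subset sums miss these thresholds.

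\textbf{Main obstacle.} We expect this last finite verification, pairing each type of $A_i$ with each type of $A_j$ across different scales $i,i'$, to be the hard step. We would first organise it by the rounded values ($1,\tfrac12,\tfrac15,\tfrac16$) to eliminate most pairs by a coarse counting argument. Only the pairs where the coarse sums tie exactly would then need the sign of the $\delta_\ell$-terms. We would handle these using the monotonicity of $h_i$ in $i$ and the identities $a_i+b_i=1$, $c_{i+1}+4d_i=1$ and $a_i+d_i=b_i+c_i=7/10-\delta_\ell$.
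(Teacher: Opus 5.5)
\textbf{There is a genuine gap: under your identical-valuations assumption the $\PMMS$ half of the lemma is false.} So the finite verification you defer to the end cannot succeed.

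Take an agent holding $\{a_i,b_i,b_i\}$, of value $x=1+b_i$, and an agent holding $\{1,1\}$. The bipartition $\{1,a_i\}\mid\{1,b_i,b_i\}$ of their union has part values $1+a_i$ and $1+2b_i$. Both are strictly above $1+b_i$, because $a_i>b_i>0$. This is exactly the case your sketch would have to exclude, and it cannot: $\{1,1\}$ splits into two pieces of value $1$ that meet both thresholds. For $\ell=1$ the agent has $x=89/60$, while the split gives parts of value $91/60$ and $118/60$.

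The same failure occurs against $\{1,1\}$ for:
\begin{itemize}
\item six copies of $c_1$, split as $\{1,c_1,c_1\}\mid\{1,c_1,c_1,c_1,c_1\}$, giving $7/5$ and $9/5$, both above $6/5$;
\item $\{c_{i+1},d_i,d_i,d_i,d_i,d_i\}$, split with $c_{i+1}$ on one side, using $c_{i+1}-d_i=5h_i\delta_\ell>0$;
\item the enlarged bundles of seven $1/6$'s, split as $\{1,\tfrac16,\tfrac16\}\mid\{1,\tfrac16,\tfrac16,\tfrac16,\tfrac16,\tfrac16\}$, giving $4/3$ and $11/6$, both above $7/6$.
\end{itemize}
Your $\EFX_0$ argument via (F1)--(F2) is correct even under identical valuations, but $\PMMS$ is not.

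\textbf{The missing idea is that the instance need not use identical valuations.} The tightness theorem only concerns the focal agent's ratio. So the nonfocal agents can be given valuations supported on their own bundles, which is what the paper does (``nonfocal agents value only their own items''). Then for every nonfocal $j$ and every $k\ne j$:
\begin{itemize}
\item $v_j(A_k\setminus\{g\})=0\le v_j(A_j)$;
\item $\mu_j(2,A_j\cup A_k)\le v_j(A_j)/2$.
\end{itemize}
Hence $\EFX_0$ and $\PMMS$ hold trivially for all nonfocal agents.

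The only substantive check is then for the focal agent with the singleton $\{z\}$, and you already did it correctly. $\EFX_0$ follows from (F2) and $v(\{z\})=1$. For $\PMMS$, in any bipartition of $\{z\}\cup B$, the part without $z$ is either all of $B$, leaving $\{z\}$ of value $1$ on the other side, or a proper subset of $B$ of value at most $v(B)-\min B\le 1$. With this change of valuations your argument coincides with the paper's proof.
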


\begin{proof}
The focal bundle is $\{z\}$ of value $1$. Deleting a minimum item from any foreign bundle leaves value at most $1$. For the six modified bundles of value $7/6$, deleting any $1/6$ item leaves value $1$. Thus $\EFX_0$ holds. For $\PMMS$, since the focal bundle is $\{z\}$, any bipartition of $\{z\}\cup B$ either isolates $\{z\}$ or leaves a subset of $B$ of value at most $v(B)-\min B\le1$, so $\mu(2,\{z\}\cup B)\le1$. Nonfocal agents value only their own items, ensuring fairness for all agents.
\end{proof}

\begin{theorem}[Tightness]\label{thm:tight-family}
For every $\ell\ge1$, there exists an additive fair-division instance with $N_\ell=14(10^\ell-1)$ agents and a complete allocation that is simultaneously $\PMMS$ and $\EFX_0$, such that
\begin{equation}\label{eq:tight-ratio}
\frac{v(A_i)}{\MMS_i} \le r_\ell := \frac{1}{C_\ell} = \frac{20(4^\ell-1)}{34\cdot4^\ell-35} \xrightarrow{\ell\to\infty} \frac{10}{17}.
\end{equation}
\end{theorem}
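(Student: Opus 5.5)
We take the seven-block complete allocation of the previous subsection. The focal agent is the owner of $\{z\}$, with valuation $v$ given by the listed item values. By Lemma~\ref{lem:focal-tight} this allocation is already $\EFX_0$ and $\PMMS$, so what remains is a lower bound $\MMS_i\ge C_\ell$ together with an algebraic check of the closed form for $r_\ell$. The agent count is immediate: each block contributes $m_\ell$ bundles under its local partition, the modification of block $0$ and the other six blocks neither creates nor destroys bundles, and so the number of agents is $N_\ell=7m_\ell=14(10^\ell-1)$.

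\textbf{Step 1 (item counts within a block).} We verify that the benchmark pairing in one block is exact. For each $i$ there are $6\cdot10^{\ell-i}$ goods $a_i$ and $6\cdot10^{\ell-i}$ goods $d_i$, and $12\cdot10^{\ell-i}$ goods $b_i$ and $12\cdot10^{\ell-i}$ goods $c_i$. This gives $\sum_i 18\cdot10^{\ell-i}=2(10^\ell-1)=m_\ell$ pairs, matching the $m_\ell$ unit goods. We also cross-check that the local partition of Table~\ref{tab:localpartition} uses exactly the same multiset. For the $c$'s: $12\cdot10^{\ell-1}=6\cdot 2\cdot10^{\ell-1}$ copies of $c_1$, and $12\cdot10^{\ell-i-1}$ copies of $c_{i+1}$. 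For the $d$'s: $5\cdot12\cdot10^{\ell-i-1}=6\cdot10^{\ell-i}$ copies of $d_i$, together with the six $d_\ell$.

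\textbf{Step 2 (benchmark partition of the whole instance).} The whole instance consists of seven blocks, with six $d_\ell$ goods moved between blocks and one extra good $z$ of value $1$. The moved goods only relocate items, so as a multiset of goods the instance is seven full blocks plus $z$. Applying the benchmark partition to each block gives $7m_\ell=N_\ell$ parts, each of value $C_\ell=17/10-\delta_\ell$. We then add $z$ to any one part. This is a valid $N_\ell$-partition of $M$ whose minimum part value is at least $C_\ell$, so $\MMS_i\ge C_\ell$ and therefore $v(A_i)/\MMS_i\le 1/C_\ell$.

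\textbf{Step 3 (closed form).} Substituting $\delta_\ell=1/(20(4^\ell-1))$ gives $C_\ell=\bigl(34(4^\ell-1)-1\bigr)/(20(4^\ell-1))=(34\cdot4^\ell-35)/(20(4^\ell-1))$. Its reciprocal is the stated $r_\ell$, and $r_\ell\to 20/34=10/17$ as $\ell\to\infty$.

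\textbf{Main obstacle.} The delicate point is not the $\MMS$ bound but confirming that every bundle in the local partition really satisfies $v(B)-\min B\le1$. This requires the stated identities $a_i+b_i=1$, $c_{i+1}+4d_i=1$, $c_1=1/5$ and $d_\ell=1/6$, together with the correct identification of the minimum good in each bundle. That in turn needs the orderings $b_i\le a_i$ and $d_i\le c_{i+1}$, which we check from the definition of $h_i$: we have $4h_{i}=h_{i+1}-2$ and $4h_i\ge h_i$. The relation $4h_i=h_{i+1}-2$ is also what makes $c_{i+1}+4d_i=1$ hold. Since Lemma~\ref{lem:focal-tight} relies on these facts, we expect to verify them briefly by direct substitution. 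We also need to note that the nonfocal agents' valuations, set to value only their own items, are nonnegative additive, so the instance is legitimate.
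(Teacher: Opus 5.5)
Your proposal is correct and matches the paper's proof: cite Lemma~\ref{lem:focal-tight} for $\PMMS$ and $\EFX_0$, apply the benchmark partition blockwise to get $\MMS_i\ge C_\ell$, and invert $C_\ell$. You also place $z$ in some part, a detail the paper leaves implicit. One small slip in your side remarks: $c_{i+1}+4d_i=1$ holds by direct cancellation of the $4h_i\delta_\ell$ terms, whereas the recursion $h_{i+1}=4h_i+2$ is what yields $b_i+c_i=7/10-\delta_\ell$, the identity your Step~2 relies on; the two orderings need only $h_i\ge1$ and $h_i\ge0$.
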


\begin{proof}
By Lemma~\ref{lem:focal-tight}, the allocation is $\PMMS$ and $\EFX_0$. Using the benchmark partition on each block gives $7m_\ell = N_\ell$ bundles of value at least $C_\ell = 17/10-\delta_\ell$. Thus $\MMS_i \ge C_\ell$, and the ratio satisfies $v(A_i)/\MMS_i \le 1/C_\ell = r_\ell \to 10/17$.
\end{proof}

\begin{corollary}[Exact universal constants]\label{cor:exact}
Under nonnegative additive valuations,
\[
\boxed{\rho^{\EFX\to\MMS}=\rho^{\PMMS\to\MMS}=\frac{10}{17}.}
\]
\end{corollary}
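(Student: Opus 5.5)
The corollary should follow by sandwiching each universal constant between a lower bound from Section~\ref{sec:direct} and an upper bound from Section~\ref{sec:tight}. I take $\rho^{\EFX\to\MMS}$ to be the largest $\beta$ such that every $\EFX$ allocation, for every instance with nonnegative additive valuations, is $\beta$-$\MMS$; equivalently, the infimum of $v_i(A_i)/\MMS_i$ over all instances, all $\EFX$ allocations and all agents with $\MMS_i>0$. The constant $\rho^{\PMMS\to\MMS}$ is defined the same way. With these definitions the proof has two halves.

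\emph{Lower bounds.} Theorem~\ref{thm:efx-direct} says every $\EFX$ allocation is $(10/17)$-$\MMS$, so $\rho^{\EFX\to\MMS}\ge 10/17$. Theorem~\ref{thm:pmms-direct} gives $\rho^{\PMMS\to\MMS}\ge 10/17$ in the same way. No further work is needed here. I would only note that the case $\MMS_i=0$ is vacuous, and the case $v_i(A_i)=0$ is already handled inside the proof of Theorem~\ref{thm:efx-direct}.

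\emph{Upper bounds.} For each $\ell\ge1$, Theorem~\ref{thm:tight-family} gives an instance and an allocation that is simultaneously $\PMMS$ and $\EFX_0$, with $v(A_i)/\MMS_i\le r_\ell$. To use it for $\EFX$, observe that $\EFX_0$ implies $\EFX$: the $\EFX_0$ inequality holds for every good, so in particular for every good of positive value. Hence the same allocation witnesses both $\rho^{\EFX\to\MMS}\le r_\ell$ and $\rho^{\PMMS\to\MMS}\le r_\ell$. It remains to confirm $r_\ell\to10/17$. Since $C_\ell=17/10-\delta_\ell$ and $\delta_\ell\to0$, we get $r_\ell=1/C_\ell\to10/17$ as $\ell\to\infty$. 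Taking the infimum over $\ell$ gives both constants at most $10/17$.

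The only step needing care is the upper bound's reliance on the construction. One must check that ``nonfocal agents value only their own items'' really makes every other agent's fairness constraint hold, and that the $\MMS$ bound uses a genuine $N_\ell$-partition of all goods including $z$ and the displaced $1/6$-goods. The corollary itself takes these as established in Lemma~\ref{lem:focal-tight} and Theorem~\ref{thm:tight-family}, so its proof is just the sandwich argument.
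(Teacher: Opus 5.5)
Your proposal is correct and matches the paper's route: the corollary is the sandwich of the lower bounds in Theorems~\ref{thm:efx-direct} and~\ref{thm:pmms-direct} with the simultaneously $\PMMS$ and $\EFX_0$ family of Theorem~\ref{thm:tight-family}, using $\EFX_0\Rightarrow\EFX$ and $r_\ell=1/C_\ell\to 10/17$. The two checks you defer do go through: the seven benchmark partitions cover every good except $z$, which can be added to any part, and agents who value only their own goods trivially satisfy both $\EFX_0$ and $\PMMS$.
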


\begin{figure}[t]
\centering
\includegraphics[width=.80\textwidth]{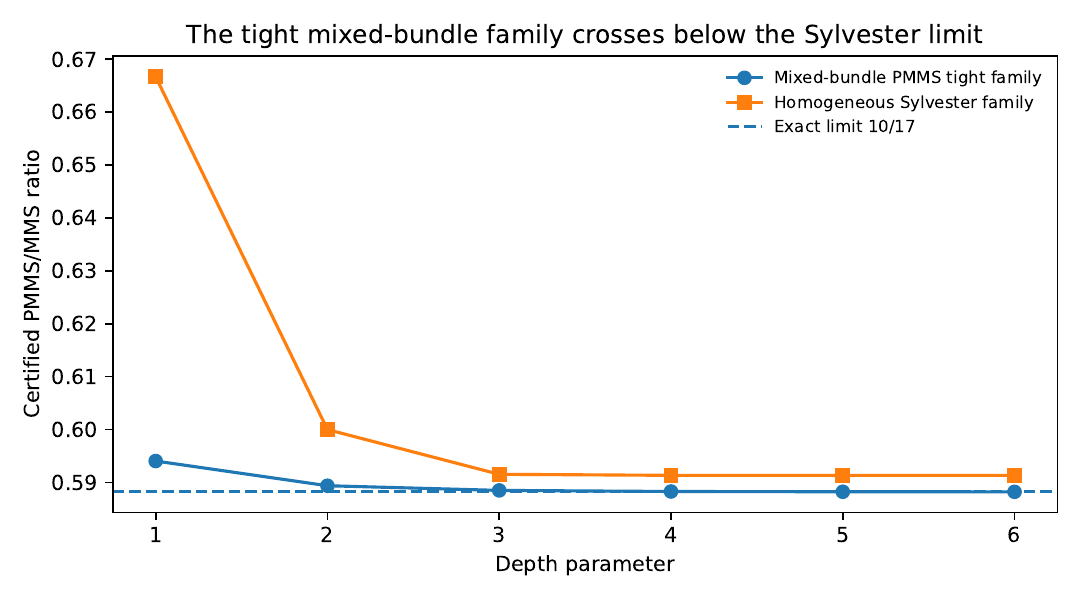}
\caption{The mixed-bundle family converges to $10/17$. The homogeneous Sylvester construction converges to $c_\infty \approx 0.5914$.}
\label{fig:tight}
\end{figure}

\section{Equivalences with Machine Covering}\label{sec:scheduling}

We now establish fixed-$n$ correspondences between fair division and selfish machine covering.

\subsection{Identical-Machine Covering and Pure Nash Equilibria}

In an identical-machine covering game, jobs with processing times $P=(p_1,\ldots,p_m)$ are assigned across $n$ machines. A schedule $S=(S_1,\ldots,S_n)$ partitions the jobs, with load $\ell_r(S)=\sum_{g\in S_r}p_g$. The social optimum is $\OPT_n(P)=\max_S \min_r \ell_r(S)$. A schedule $S$ is a pure Nash equilibrium (PNE) if no job can move to another machine and reduce its machine's load.

\begin{lemma}[PNE condition~\cite{EpsteinEtAl2009}]\label{lem:pne}
A schedule $S$ is a PNE if and only if for every machine $r$ and every job $g\in S_r$,
\begin{equation}\label{eq:pne}
\ell_r(S)-p_g\le \min_{j\in[n]}\ell_j(S).
\end{equation}
\end{lemma}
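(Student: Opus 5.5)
The statement says a job $g$ on machine $r$ wants to move whenever moving lowers the load it experiences. So in this model a job strictly prefers machine $j\neq r$ exactly when its load after moving, $\ell_j(S)+p_g$, is strictly less than its current load $\ell_r(S)$. The plan is to unfold the definition of PNE into a per-job inequality, show it is equivalent to \eqref{eq:pne}, and handle the case $j=r$ separately.

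\textbf{Step 1 (unfold PNE).} By definition, $S$ is a PNE if and only if for every machine $r$, every job $g\in S_r$ and every machine $j\ne r$,
\[
\ell_r(S)\le \ell_j(S)+p_g .
\]
Since this must hold for all $j\ne r$, it is equivalent to
\[
\ell_r(S)-p_g\le \min_{j\ne r}\ell_j(S).
\]

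\textbf{Step 2 (include $j=r$).} Condition \eqref{eq:pne} takes the minimum over all $j\in[n]$, including $j=r$. The term $j=r$ is automatically satisfied, because $p_g\ge 0$ gives $\ell_r(S)-p_g\le \ell_r(S)$. Hence
\[
\ell_r(S)-p_g\le\min_{j\ne r}\ell_j(S)
\quad\Longleftrightarrow\quad
\ell_r(S)-p_g\le \min_{j\in[n]}\ell_j(S).
\]
Combined with Step 1, this gives both directions of the equivalence, since every step above is an if-and-only-if.

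\textbf{Expected obstacle.} The only delicate point is the modelling convention. A covering-game PNE must be read as jobs minimizing the load of their own machine, not maximizing it; the alternative reading would reverse the inequality. I will state this convention explicitly, with the tie-breaking rule that a job moves only under a strict improvement. I will also note the degenerate case $n=1$: there is no deviation, and \eqref{eq:pne} holds trivially because $\ell_1(S)-p_g\le\ell_1(S)$.

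Finally, I will remark that \eqref{eq:pne} has exactly the form of the $\EFX_0$ condition, with loads in place of bundle values and every agent's valuation equal to $p$. The $\min_j$ plays the role of the focal agent with the smallest bundle. This is what the later equivalence $\rho_n^{\EFX}=1/\PoA_n^{\mathrm{cov}}$ will use.
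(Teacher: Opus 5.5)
Your proof is correct. You unfold the no-profitable-deviation condition $\ell_r(S)\le \ell_j(S)+p_g$ for all $j\ne r$, reading deviations as strict improvements in the load of the job's own machine, which is what the paper's definition intends. You then note that the $j=r$ term is automatic since $p_g\ge 0$, which gives exactly \eqref{eq:pne}. The paper gives no proof of its own and only cites Epstein, Kleiman, and van Stee, so your direct unfolding is the standard argument behind the lemma.
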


For fixed $n$, the pure price of anarchy is $\PoA_n^{\mathrm{cov}} = \sup_{P}\sup_{S\in\mathrm{PNE}(P),\,\min_r\ell_r(S)>0} \frac{\OPT_n(P)}{\min_r\ell_r(S)}$.

\subsection{EFX and PNE Correspondence}

For fixed $n$, let $\rho_n^{\EFX} = \inf \{v_i(A_i)/\MMS_i : A\text{ is }\EFX\text{ in an }n\text{-agent instance},\ \MMS_i>0\}$.

\begin{theorem}[Fixed-$n$ EFX and PNE equivalence]\label{thm:efx-equivalence}
For every $n\ge2$,
\begin{equation}\label{eq:efx-equivalence}
\rho_n^{\EFX} = \rho_n^{\EFX_0} = \frac{1}{\PoA_n^{\mathrm{cov}}}.
\end{equation}
\end{theorem}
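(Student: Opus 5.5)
We prove the chain $\rho_n^{\EFX_0}\ge\rho_n^{\EFX}\ge 1/\PoA_n^{\mathrm{cov}}\ge\rho_n^{\EFX_0}$. The first inequality is immediate: every $\EFX_0$ allocation is $\EFX$, so the infimum defining $\rho_n^{\EFX}$ is taken over a larger class. The other two inequalities come from translating between an $\EFX$ allocation seen by a focal agent and a PNE schedule, with goods as jobs of length $v_i(g)$ and bundles as machines.

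\emph{Step 1: $\rho_n^{\EFX}\ge 1/\PoA_n^{\mathrm{cov}}$.} Fix an $n$-agent instance, an $\EFX$ allocation $A$, and an agent $i$ with $\MMS_i>0$. Put $s=v_i(A_i)$.

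If $s=0$, then by the argument opening Theorem~\ref{thm:efx-direct}, each foreign bundle contains at most one positive good. Moreover, $A_i$ itself contains no positive goods. So there are at most $n-1$ positive goods, and $\MMS_i=0$; this case is excluded.

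Now assume $s>0$. I do \emph{not} want the reduction of Lemma~\ref{lem:reduction} here, because it changes $n$. Instead, turn zero-valued goods into jobs of length $0$, and fix them as follows. For each foreign bundle $A_j$, the $\EFX$ condition gives $v_i(A_j)-v_i(g)\le s$ for every positive good $g\in A_j$.
\begin{itemize}[nosep]
\item If $A_j$ contains a positive good, the minimum over all goods, including zeros, may be $0$, which would violate \eqref{eq:pne}. So move all zero-valued goods of $A_j$ into $A_i$. Loads are unchanged, and the PNE condition for $A_j$ now only involves positive jobs.
\item If $A_j$ has load $0$, all its jobs have length $0$, and \eqref{eq:pne} holds trivially.
\end{itemize}
After this, every machine $r$ satisfies $\ell_r-p_g\le s$. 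We also need $s\le\ell_j$ for all $j$, since the right side of \eqref{eq:pne} is the minimum load. This can fail: a foreign bundle may be worth less than $s$ to agent $i$.

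The fix is to \emph{relabel}. Let $m=\min_r\ell_r$, attained at bundle $A_k$. For any machine $r$ with $\ell_r>m$, I want $\ell_r-p_g\le m$. That does not follow directly from $\EFX$ with respect to $A_i$.

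So I would argue instead that $\MMS_i$ is compared against $s$, not against $m$. Consider a modified schedule in which the bundles with value below $s$ are merged. Alternatively, use the definition of PoA with $\min_r\ell_r$ replaced by the structure above.

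The cleanest route I would try first: raise every foreign bundle with $v_i(A_j)<s$ to load exactly $s$. Do this by adding new small jobs. This only increases $\OPT_n$. It preserves \eqref{eq:pne} if the added jobs are not smaller than needed, for instance one added job of size $s-v_i(A_j)$, whose removal leaves $v_i(A_j)\le s$. Then the minimum load is $s$ (attained at $A_i$), the schedule is a PNE, and $\OPT_n\ge\MMS_i$. Hence $\MMS_i/s\le\PoA_n^{\mathrm{cov}}$.

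I also need to check the original jobs in those padded bundles. Removing an original positive job $g$ from a padded bundle leaves load $s-v_i(g)+\text{pad}$, which stays $\le s$ because the padding is at most $s-v_i(A_j)\le s-v_i(g)$... This needs verification, and is where I expect care is needed. Zero jobs in $A_i$ are fine, since $\ell_i-0=s$.

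\emph{Step 2: $\rho_n^{\EFX_0}\le 1/\PoA_n^{\mathrm{cov}}$.} Given jobs $P$ and a PNE $S$ with $\min_r\ell_r>0$, relabel the machines so that machine $1$ has the minimum load $L$. Build the instance as follows:
\begin{itemize}[nosep]
\item agent $1$ values goods by job lengths;
\item every other agent $j$ values only the goods of $S_j$, each at $1$, as in Lemma~\ref{lem:focal-tight};
\item the allocation is $A=S$.
\end{itemize}
For agent $1$, \eqref{eq:pne} gives $\ell_r-p_g\le L=v_1(A_1)$ for all $g$, which is exactly $\EFX_0$. Other agents value only their own goods, so they envy no one. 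Finally, $\MMS_1=\OPT_n(P)$, giving ratio $L/\OPT_n(P)$.

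The main obstacle is Step 1's padding: making the minimum load equal the focal value while preserving the PNE condition. If single padding jobs fail, I would pad with jobs of size $\min_{g\in A_j}v_i(g)$ repeatedly, plus a remainder.
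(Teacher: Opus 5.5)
Your proposal is correct and follows the paper's proof. In one direction you pad each foreign bundle worth less than $s$ with a single dummy job, so every load is at least $s=\ell_i$ and the covering optimum only grows ($\OPT_n\ge\MMS_i$). In the converse you let the minimum-load machine's agent value goods by job length and every other agent value only its own goods, which gives an $\EFX_0$ allocation. Writing out the chain $\rho_n^{\EFX_0}\ge\rho_n^{\EFX}\ge 1/\PoA_n^{\mathrm{cov}}\ge\rho_n^{\EFX_0}$ is a useful addition.

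The step you flagged as the main obstacle is immediate. A padded machine has load exactly $s=\min_r\ell_r$, so $\ell_r-p_g\le\ell_r=\min_r\ell_r$ holds for every job on it, and no fallback padding scheme is needed. Your expression $s-v_i(g)+\text{pad}$ counts the padding twice; the correct remaining load is $s-v_i(g)$.
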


\begin{proof}
Given an $\EFX$ allocation $A$ with $s=v_i(A_i)>0$, create a job for each positive item with $p_g=v_i(g)/s$. For any machine with load below $1$, add a dummy job to bring its load to $1$. By $\EFX$, $\ell_r-p_g\le 1$ for all jobs, forming a PNE of minimum load $1$. The covering optimum is $\MMS_i/s$, yielding $\MMS_i/s \le \PoA_n^{\mathrm{cov}}$, so $\rho_n^{\EFX}\ge 1/\PoA_n^{\mathrm{cov}}$.

Conversely, given a PNE $S$ with least load $\ell_{\min}>0$ on machine $c$, define a fair-division instance where goods correspond to jobs and agent $c$ has valuation $v_c(g)=p_g$. Nonfocal agents value only their own items. By Lemma~\ref{lem:pne}, agent $c$ satisfies $\EFX_0$, and the ratio is $\ell_{\min}/\OPT_n(P)$.
\end{proof}

\begin{corollary}[Universal price of anarchy]\label{cor:poa}
Combining Corollary~\ref{cor:exact} and Theorem~\ref{thm:efx-equivalence} gives $\sup_n\PoA_n^{\mathrm{cov}}=17/10$.
\end{corollary}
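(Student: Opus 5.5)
The plan is to combine the two earlier results directly. Corollary~\ref{cor:exact} gives $\rho^{\EFX\to\MMS}=\inf_n\rho_n^{\EFX}=10/17$. Theorem~\ref{thm:efx-equivalence} gives $\rho_n^{\EFX}=1/\PoA_n^{\mathrm{cov}}$ for every fixed $n\ge2$. Taking the supremum over $n$ then yields $\sup_n\PoA_n^{\mathrm{cov}}=1/\inf_n\rho_n^{\EFX}=17/10$. To make this airtight, I will prove the two inequalities separately rather than merely quoting the corollary.

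\textbf{Upper bound $\PoA_n^{\mathrm{cov}}\le 17/10$ for every $n$.} Fix a PNE $S$ with $\ell_{\min}>0$, attained on machine $c$, and rescale so that $\ell_{\min}=1$. By Lemma~\ref{lem:pne}, every machine $r\ne c$ satisfies $\ell_r-\min_{g\in S_r}p_g\le 1$.

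I then run the charging argument of Theorem~\ref{thm:efx-direct} with machine $c$ as the focal bundle.
\begin{enumerate}[label=(\roman*),leftmargin=2em]
\item Discard zero-length jobs. Remove machines holding a single job, together with that job, invoking Lemma~\ref{lem:reduction} each time. This step does not decrease $\OPT$ of the reduced instance relative to the original $\OPT_n(P)$.
\item Every job now has length at most $1$: a job longer than $1$ on a remaining machine with at least two jobs would give $\ell_r-\min p>1$.
\item By Lemma~\ref{lem:bundleweight}, every non-focal machine has total weight at most $1$. The focal machine has weight strictly below $1$.
\item If $\OPT>17/10$, then Lemma~\ref{lem:highweight} forces every part of an optimal partition to have weight at least $1$, which contradicts the total weight bound.
\end{enumerate}
The same edge case as in the theorem arises here and is handled the same way. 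If all non-focal machines are removed, the reduced instance has $n'=1$ and its $\OPT$ equals the focal load $1\le 17/10$.

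\textbf{Lower bound $\sup_n\PoA_n^{\mathrm{cov}}\ge 17/10$.} I take the instances of Theorem~\ref{thm:tight-family} with $N_\ell$ agents. Interpret the focal agent's values as job lengths and the allocation as a schedule. Lemma~\ref{lem:focal-tight} shows that every bundle satisfies~\eqref{eq:pne} with minimum load $1$; this is the $\EFX_0$ property, and it includes the six modified bundles of value $7/6$. Hence Lemma~\ref{lem:pne} makes the schedule a PNE with $\min_r\ell_r=1$. The benchmark partition shows $\OPT_{N_\ell}\ge C_\ell$. Therefore $\PoA_{N_\ell}^{\mathrm{cov}}\ge C_\ell=17/10-\delta_\ell$, and this tends to $17/10$ as $\ell\to\infty$.

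\textbf{Main obstacle.} The delicate step is the reduction in the upper bound. When a singleton machine is removed, the value of $n$ drops, so the claim to verify is that Lemma~\ref{lem:reduction} applies to covering optima in exactly the same way as to $\mu_i$. It does, because $\OPT_n(P)=\mu(n,P)$.

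I also need that the minimum load remains attained at the focal machine after the reduction. This matters less than it seems: the argument only uses that the focal load equals $1$ and that every other machine satisfies~\eqref{eq:pne} with right-hand side $1$. Both facts survive the removals.
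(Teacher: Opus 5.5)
Your proposal is correct and is essentially the paper's proof: the paper's argument is exactly the combination in your first paragraph, and your two separate inequalities just unroll Theorem~\ref{thm:efx-direct} and Theorem~\ref{thm:tight-family} through the job--good translation of Theorem~\ref{thm:efx-equivalence}. Your lower bound leaves one check implicit, namely that the focal load $1$ is the minimum load of the seven-block schedule; it holds, since the other bundles have loads $2$, $1+b_i$, $6/5$, $1+d_i$ and $7/6$.
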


\subsection{PMMS and Two-Machine Locality Gaps}

For a schedule $S$ and distinguished machine $c$, define the exact two-machine repartition locality gap:
\begin{equation}\label{eq:Gamma}
\Gamma_n = \sup_{P,S,c} \frac{\OPT_n(P)}{\ell_c(S)},
\quad\text{subject to}\quad
\OPT_2(P_c\cup P_j)\le\ell_c(S)\quad(\forall j\ne c).
\end{equation}

\begin{theorem}[Fixed-$n$ PMMS equivalence]\label{thm:pmms-equivalence}
For every $n\ge2$,
\begin{equation}\label{eq:pmms-equivalence}
\boxed{\rho_n^{\PMMS}=\frac{1}{\Gamma_n}.}
\end{equation}
\end{theorem}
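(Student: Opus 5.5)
We prove the equality as two inequalities, following the pattern of Theorem~\ref{thm:efx-equivalence}. Here $\rho_n^{\PMMS}$ is understood as the infimum of $v_i(A_i)/\MMS_i$ over $\PMMS$ allocations in $n$-agent instances with $\MMS_i>0$. The key observation is that, for a focal agent $i$ with additive valuation, the $\PMMS$ constraints concerning agent $i$ read $v_i(A_i)\ge\mu_i(2,A_i\cup A_j)$ for all $j\ne i$. This is exactly the constraint $\OPT_2(P_c\cup P_j)\le\ell_c(S)$ in~\eqref{eq:Gamma}, once we identify goods with jobs, values with processing times, bundles with machines, and $c$ with $i$.

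\emph{Step 1: $\rho_n^{\PMMS}\ge 1/\Gamma_n$.} Take a $\PMMS$ allocation $A$, a focal agent $i$ with $\MMS_i>0$, and set $s=v_i(A_i)$.

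First we show $s>0$. If $s=0$, then for every $j\ne i$ we have $\mu_i(2,A_i\cup A_j)=0$, so $A_j$ contains at most one good of positive value to $i$. Hence at most $n-1$ goods have positive value to $i$, which gives $\MMS_i=0$, a contradiction.

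Now define jobs $p_g=v_i(g)$ for every good (zero-valued goods are allowed as zero-length jobs, or can be dropped without changing $\OPT_2$ or $\OPT_n$). Let machine $r$ hold the jobs of $A_r$, and take $c=i$. Then $\ell_c(S)=s$, and $\OPT_2(P_c\cup P_j)=\mu_i(2,A_i\cup A_j)\le s$ for every $j\ne c$, so the triple $(P,S,c)$ is feasible for~\eqref{eq:Gamma}. Since $\OPT_n(P)=\MMS_i$, we get $\MMS_i/s\le\Gamma_n$, that is, $v_i(A_i)/\MMS_i\ge 1/\Gamma_n$. Taking the infimum gives the claim.

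\emph{Step 2: $\rho_n^{\PMMS}\le 1/\Gamma_n$.} Take any feasible triple $(P,S,c)$ with $\ell_c(S)>0$ and $\OPT_n(P)>0$. (If $\ell_c(S)=0$ the ratio is not meaningful, and we adopt the same convention as for $\PoA_n^{\mathrm{cov}}$, restricting to positive distinguished load.) Build an $n$-agent instance whose goods are the jobs. Agent $c$ has $v_c(g)=p_g$. Every other agent $j$ values only its own goods, say $v_j(g)=1$ for $g\in S_j$ and $0$ otherwise, and the allocation is $A_r=S_r$.

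For agent $c$, the $\PMMS$ condition is exactly the feasibility constraint in~\eqref{eq:Gamma}. For any agent $j\ne c$ and any $k\ne j$, agent $j$ values $A_j\cup A_k$ at $v_j(A_j)$ and receives all of that value. Any bipartition assigns value at most $v_j(A_j)$ to its smaller side, so $\mu_j(2,A_j\cup A_k)\le v_j(A_j)$ and the condition holds. Therefore $A$ is $\PMMS$ with $\MMS_c=\OPT_n(P)$, and so $\rho_n^{\PMMS}\le \ell_c(S)/\OPT_n(P)$. Taking the supremum over feasible triples yields $\rho_n^{\PMMS}\le1/\Gamma_n$.

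\emph{Main obstacle.} The only delicate points are the degenerate cases, which are the same issues handled in Theorem~\ref{thm:efx-equivalence}:
\begin{itemize}[nosep]
\item ensuring $s>0$, which is settled by the $s=0$ argument in Step 1;
\item agreeing on the conventions when $\ell_c(S)=0$ or $\OPT_n(P)=0$;
\item checking that nonfocal agents in Step 2 are not hurt by goods they value at zero, which follows because $\PMMS$ is a pairwise maximin condition and not an $\EFX$-type condition.
\end{itemize}

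No reduction lemma and no weight function are needed: the correspondence is a literal translation of definitions.
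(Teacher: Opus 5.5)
Your proof is correct and takes the same approach as the paper: set $p_g=v_i(g)$ and identify $\mu_i(2,A_i\cup A_j)$ with $\OPT_2(P_c\cup P_j)$ and $\MMS_i$ with $\OPT_n(P)$, in both directions. Your version is more careful than the paper's one-line argument, since it explicitly rules out $v_i(A_i)=0$ when $\MMS_i>0$ and checks the $\PMMS$ condition for the nonfocal agents in the converse construction.
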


\begin{proof}
Mapping items to jobs with $p_g=v_i(g)$ identifies $\mu_i(2,A_i\cup A_j)$ with $\OPT_2(P_c\cup P_j)$ and $\MMS_i$ with $\OPT_n(P)$ in both directions.
\end{proof}

\begin{proposition}\label{prop:gamma-poa}
For every $n\ge2$, $\Gamma_n \le \PoA_n^{\mathrm{cov}}$, and consequently $\sup_n \Gamma_n = 17/10$.
\end{proposition}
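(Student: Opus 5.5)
The plan is to show that every feasible configuration for $\Gamma_n$ is, after a normalization, a pure Nash equilibrium of the covering game with at least the same ratio. Then $\Gamma_n\le\PoA_n^{\mathrm{cov}}$, and the supremum statement follows by squeezing between two bounds we already have.

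Fix $(P,S,c)$ with $\OPT_2(P_c\cup P_j)\le \ell_c(S)$ for all $j\ne c$, and scale so that $\ell_c(S)=1$. If $\ell_c(S)=0$, the ratio is not meaningful, and the same zero-job reasoning as in Theorem~\ref{thm:efx-direct} shows $\OPT_n(P)=0$. For each $j\ne c$ and each job $g\in S_j$ we have $\ell_j-p_g\le 1$: otherwise the bipartition $(S_j\setminus\{g\},\,S_c\cup\{g\})$ would have both sides exceeding $1$, contradicting $\OPT_2\le 1$. This is exactly the argument of Lemma~\ref{lem:pmms-local}, applied to an arbitrary job rather than a minimum one.

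The local condition alone is not yet a PNE in the sense of Lemma~\ref{lem:pne}, for two reasons.
\begin{itemize}
\item The right-hand side there is $\min_j\ell_j$, not $\ell_c$. Some machine $j$ may have $\ell_j<1$.
\item Machine $c$ itself must satisfy $\ell_c-p_g\le\min_j\ell_j$.
\end{itemize}
I would follow the proof of Theorem~\ref{thm:efx-equivalence} and pad every machine with load below $1$ by a dummy job, bringing its load up to exactly $1$. Adding jobs cannot decrease $\OPT_n$, so the numerator only grows. Afterwards every load is at least $1$, and the minimum load is exactly $1$, attained at $c$. The pairwise condition is preserved under padding:
\begin{itemize}
\item On a padded machine $j$, removing any job leaves at most $1$, since $\ell_j=1$.
\item On an unpadded machine $j\ne c$, the bound $\ell_j-p_g\le1$ was shown above.
\item On machine $c$, $\ell_c-p_g\le 1$ holds trivially.
\end{itemize}
So every machine satisfies \eqref{eq:pne} with $\min_j\ell_j=1$, and the padded schedule is a PNE with minimum load $1$. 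Its optimum is at least $\OPT_n(P)$, so $\OPT_n(P)/\ell_c(S)\le\PoA_n^{\mathrm{cov}}$. Taking the supremum gives $\Gamma_n\le\PoA_n^{\mathrm{cov}}$.

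For the consequence, Corollary~\ref{cor:poa} gives $\sup_n\Gamma_n\le\sup_n\PoA_n^{\mathrm{cov}}=17/10$. For the reverse inequality, use Theorem~\ref{thm:pmms-equivalence} together with the tight family of Theorem~\ref{thm:tight-family}. Those instances are $\PMMS$, with $N_\ell$ agents, and have ratio $r_\ell\to10/17$. Hence $\Gamma_{N_\ell}\ge 1/r_\ell\to 17/10$.

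The main delicacy is the padding step. We must make sure the dummy jobs do not break the pairwise condition: a padded machine has load exactly $1$, so deleting any job leaves at most $1$. We must also make sure the distinguished machine $c$ genuinely attains the minimum load, which holds because all loads are now at least $1=\ell_c$. The case $\ell_c=0$ with $\OPT>0$ needs a brief separate check. If $\ell_c=0$, the constraint $\OPT_2(P_c\cup P_j)\le 0$ means every other machine has at most one positive job, so $\OPT_n=0$ for $n\ge2$ and such configurations are excluded or trivial.
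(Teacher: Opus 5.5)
Your proof is correct and follows essentially the route the paper's equivalences encode, carried out directly on schedules. Reading $\Gamma_n\le\PoA_n^{\mathrm{cov}}$ as $\rho_n^{\PMMS}\ge\rho_n^{\EFX}$ (every $\PMMS$ allocation is $\EFX$) through Theorems~\ref{thm:efx-equivalence} and~\ref{thm:pmms-equivalence} uses exactly your exchange step and dummy-job padding, and your squeeze for $\sup_n\Gamma_n$ (Corollary~\ref{cor:poa} above, the $\PMMS$ tight family of Theorem~\ref{thm:tight-family} below) is the intended one. One small repair is needed. The exchange step needs $p_g>0$: with $S_c=\{1\}$ and $S_j=\{0.6,0.6,0\}$ the two-machine constraint holds, yet $\ell_j-0>1$. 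So first discard zero-length jobs, which leaves $\OPT_n(P)$, $\ell_c(S)$ and every constraint $\OPT_2(P_c\cup P_j)\le\ell_c(S)$ unchanged.
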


\begin{proposition}[Monotonicity]\label{prop:monotone}
The sequence $\Gamma_n$ is nondecreasing in $n$, so $\rho_n^{\PMMS}$ is nonincreasing, with $\lim_{n\to\infty}\rho_n^{\PMMS}=10/17$.
\end{proposition}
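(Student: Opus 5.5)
We need to prove three things: $\Gamma_n$ is nondecreasing in $n$; consequently $\rho_n^{\PMMS}=1/\Gamma_n$ is nonincreasing (immediate from Theorem~\ref{thm:pmms-equivalence}); and $\lim_{n\to\infty}\rho_n^{\PMMS}=10/17$.

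\textbf{Monotonicity by padding.} Take any feasible triple $(P,S,c)$ for $\Gamma_n$ with $\ell_c(S)>0$, and normalize so that $\ell_c(S)=1$. Build an $(n+1)$-machine instance by adding a single new job of processing time $1$ placed alone on a new machine $n+1$.
\begin{itemize}[nosep]
\item \emph{Feasibility.} The new pair satisfies $\OPT_2(P_c\cup\{1\})\le 1$. The reason is that any bipartition either puts the unit job with some nonempty part of $P_c$, leaving the other part strictly below $1$, or isolates it, so the other part is $P_c$ with value $1$. The old pairwise constraints are unchanged, so the new triple is feasible for $\Gamma_{n+1}$.
\item \emph{Objective.} Take an optimal $n$-partition of $P$ and add the unit job as an $(n+1)$-st part. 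Its minimum is $\min\{\OPT_n(P),1\}$.
\end{itemize}
This gives $\OPT_{n+1}\ge\OPT_n(P)$ whenever $\OPT_n(P)\ge1$. Since $\Gamma_n\ge1$ (witnessed by the trivial instance with equal loads), the suprema can be restricted to instances with ratio at least $1$. Hence $\Gamma_{n+1}\ge\Gamma_n$. The reduction mirrors Lemma~\ref{lem:reduction} in reverse.

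\textbf{Limit.} Since $\Gamma_n$ is monotone and bounded above by $17/10$ (Proposition~\ref{prop:gamma-poa}), its limit equals $\sup_n\Gamma_n=17/10$, which Proposition~\ref{prop:gamma-poa} also supplies. Independently, $\sup_n\Gamma_n\ge17/10$ follows from Theorem~\ref{thm:tight-family} via Theorem~\ref{thm:pmms-equivalence}: the tight instances with $N_\ell$ agents give $\Gamma_{N_\ell}\ge C_\ell\to17/10$. Therefore $\rho_n^{\PMMS}\to10/17$.

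\textbf{Main subtlety.} The only delicate step is the padding's objective bound when $\OPT_n(P)<1$. Those instances contribute ratio below $1\le\Gamma_{n+1}$, so they do not matter.
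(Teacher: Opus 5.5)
Your padding step fails at the objective bound. Putting the new unit job alone on machine $n+1$ gives an $(n+1)$-partition with minimum $\min\{\OPT_n(P),1\}$. When $\OPT_n(P)\ge 1$ this equals $1$, not $\OPT_n(P)$. The inequality $\min\{\OPT_n(P),1\}\ge\OPT_n(P)$ holds exactly when $\OPT_n(P)\le 1$, so your ``main subtlety'' is reversed: the padding is harmless only on the instances that do not matter.

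The construction itself can destroy the ratio, not just the written bound. Take $n=3$, $P_c=\{1\}$ and $P_2=P_3=\{0.6,0.4,0.4\}$. Each pair has $\OPT_2(P_c\cup P_j)=1$, so the triple is feasible. Moreover $\OPT_3(P)=6/5$, via $\{1,0.4\},\{0.6,0.6\},\{0.4,0.4,0.4\}$. Now add a unit job on a fourth machine. If the minimum exceeded $1$, every part containing a unit job would need another job. Those parts would then absorb value at least $2.8$ if they are two distinct parts, leaving at most $2$ for two parts. If both unit jobs share one part, it absorbs at least $2$, leaving at most $2.8$ for three parts. Both are impossible, so $\OPT_4=1$, and the padded instance has ratio $1<6/5$. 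Hence $\Gamma_{n+1}\ge\Gamma_n$ does not follow from your argument.

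The repair is to pad with a single job of size $T:=\OPT_n(P)$, or anything larger. Feasibility does not depend on the size of the new job: in any bipartition of $P_c\cup\{T\}$, the part not containing $T$ is a subset of $P_c$, so $\OPT_2(P_c\cup\{T\})\le\ell_c(S)$. This is the correct form of your feasibility observation; your ``strictly below $1$'' should be ``at most $1$'' anyway, because of zero-size jobs. Now the optimal $n$-partition of $P$ together with the part $\{T\}$ has minimum $\min\{\OPT_n(P),T\}=\OPT_n(P)$, so the ratio is preserved and $\Gamma_{n+1}\ge\Gamma_n$.

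With monotonicity repaired, your limit argument goes through. It genuinely needs monotonicity, because the tight family only controls $\Gamma_n$ along the subsequence $n=N_\ell$. For the upper bound $\Gamma_n\le 17/10$ you can cite Theorem~\ref{thm:pmms-direct} together with Theorem~\ref{thm:pmms-equivalence}. That is preferable to Proposition~\ref{prop:gamma-poa}, which, like this proposition, is stated in the paper without proof.
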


\section{Analysis of Homogeneous Layered Partitions}\label{sec:sylvester}

The recursive upper bound of~\cite{ABM2018} generates ratios $c_r = (1+\sum_{t=1}^r 1/d_t)^{-1}$, where $d_1=2$ and $d_{t+1}=d_t(d_t+1)$, converging to $c_\infty \approx 0.591355 > 10/17$.

We formulate this construction as an optimization problem over homogeneous layered partitions:
\begin{equation}\label{eq:layerproblem}
\max\left\{\sum_{j=1}^r \frac{1}{k_j-1} : 2\le k_1\le\cdots\le k_r,\ \sum_{j=1}^r \frac{1}{k_j}<1,\ k_j\in\mathbb Z\right\}.
\end{equation}

\begin{theorem}[Optimality of Sylvester recursion]\label{thm:layered}
For every feasible sequence in~\eqref{eq:layerproblem}, $\sum_{j=1}^r 1/(k_j-1) \le \sum_{j=1}^r 1/a_j$, with equality uniquely attained by the Sylvester sequence $(k_1,\ldots,k_r)=(2,3,7,43,\ldots,a_r+1)$.
\end{theorem}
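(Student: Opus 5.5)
The plan is to reduce the statement to a majorization inequality between the reciprocals $x_j=1/k_j$ and the reciprocals of the Sylvester numbers, and then apply a convexity argument. Write $e_1=2,\ e_{t+1}=e_t^2-e_t+1$ for the Sylvester sequence $(2,3,7,43,\ldots)$. I read the $a_j$ in the statement as $a_j=e_j-1$, i.e.\ $a_1=1$, $a_{j+1}=a_j(a_j+1)$, which matches $d_t$ shifted by one. The objective becomes
\[
\frac{1}{k-1}=\frac{1/k}{1-1/k}=f(1/k),\qquad f(x)=\frac{x}{1-x},
\]
and $f$ is strictly convex and increasing on $[0,1)$. So we must show $\sum_j f(x_j)\le\sum_j f(1/e_j)$ whenever $x_1\ge\cdots\ge x_r$ are unit fractions with $\sum_j x_j<1$.

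\textbf{Step 1 (Curtiss-type bound).} For every $t\ge1$, any $t$ unit fractions with sum strictly less than $1$ have sum at most
\[
\sum_{j\le t}\frac{1}{e_j}=1-\frac{1}{e_{t+1}-1},
\]
and equality holds only for $(e_1,\ldots,e_t)$. I would prove this by induction on $t$, using the classical exchange argument.
\begin{itemize}
\item Sort the denominators increasingly.
\item Suppose some prefix deviates from the greedy choice. Replacing the first deviating denominator by the greedy one, and rescaling the tail, strictly increases the total while keeping it below $1$.
\item The key inequality needed is that the gap $1-\sum_{j\le t}1/k_j$ is at least $1/\prod_{j\le t}k_j$. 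This holds because the gap is a positive rational with denominator dividing $\prod_{j\le t} k_j$.
\item This product bound, together with AM--GM on the denominators, gives the induction step.
\end{itemize}
I expect this step, in particular the uniqueness of the extremizer, to be the main obstacle. I would follow Curtiss's (1922) proof, or Soundararajan's short version.

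\textbf{Step 2 (weak majorization).} Every prefix $(x_1,\ldots,x_t)$ of a feasible sequence is itself a set of $t$ unit fractions with sum below $1$. Step 1 then gives
\[
\sum_{j\le t}x_j\le\sum_{j\le t}\frac{1}{e_j}\qquad\text{for all } t\le r,
\]
with both sequences sorted decreasingly. Thus $(x_j)$ is weakly submajorized by $(1/e_j)$.

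\textbf{Step 3 (Karamata / Tomić--Weyl).} For a convex increasing function $f$, weak submajorization implies $\sum_j f(x_j)\le\sum_j f(1/e_j)=\sum_j 1/a_j$. This is exactly the claimed inequality.

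\textbf{Equality case.} Suppose equality holds. Strict convexity, combined with the fact that the relevant Abel-summation differences of slopes are strictly positive, forces every prefix inequality in Step 2 to be tight. In particular the case $t=1$ forces $x_1=1/2$, and inductively each further prefix must be tight. By the uniqueness part of Step 1, each prefix must then coincide with the Sylvester prefix, so $k_j=e_j=a_j+1$ for all $j$.
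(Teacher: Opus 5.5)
\textbf{No proof in the paper to compare against.} The paper states Theorem~\ref{thm:layered} without proof. It also never defines $a_j$; your reading $a_1=1$, $a_{j+1}=a_j(a_j+1)$, i.e.\ $a_{t+1}=d_t$, is the only one consistent with $(2,3,7,43,\ldots,a_r+1)$.

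\textbf{Your argument is correct.} Applying Curtiss's bound to each prefix gives $(1/k_j)\prec_w(1/e_j)$. Tomi\'c--Weyl with the convex increasing $f(x)=x/(1-x)$ then yields the inequality. In fact you prove more than is stated: the Sylvester sequence maximizes $\sum_j\phi(1/k_j)$ for every convex nondecreasing $\phi$. This shows the theorem is really a corollary of Curtiss's theorem rather than something specific to the objective $1/(k-1)$.

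\textbf{Two refinements.}

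First, the equality case does not need the uniqueness part of Step~1. Write $y_j=1/e_j$, let $X_t,Y_t$ be the partial sums, and let $c_j$ be the secant slope of $f$ between $x_j$ and $y_j$. Then
$\sum_j (f(y_j)-f(x_j))=\sum_{t<r}(c_t-c_{t+1})(Y_t-X_t)+c_r(Y_r-X_r)$.
Here $c_r\ge\min f'=1$, and $c_t>c_{t+1}$. That strict inequality comes from strict convexity together with the strict decrease $y_t>y_{t+1}$; with ties among the $y_j$ it could fail, so this is worth saying explicitly. Hence equality forces $X_t=Y_t$ for every $t$, and differencing gives $x_j=1/e_j$ directly, with no appeal to uniqueness.

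Second, your bullet-point sketch of Step~1 is not itself a proof. The assertion that replacing the first deviating denominator by the greedy one ``strictly increases the total'' is exactly the content of Curtiss's theorem. The standard short proof (Soundararajan) instead combines the divisibility gap $1-\sum_{j\le t}1/k_j\ge 1/\prod_{j\le t}k_j$ with a lemma comparing sums of decreasing sequences through their partial products. Since the result is classical (Curtiss 1922; Soundararajan 2005) and allows repeated denominators, simply cite it. With that citation your proof is complete.
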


Theorem~\ref{thm:layered} shows that continuing the historical recursion cannot reach $10/17$; the construction in Section~\ref{sec:tight} overcomes this by using mixed bundles.

\section{Related Work}\label{sec:prior}

Fair division of indivisible goods has been studied across economics and computer science~\cite{Budish2011,CaragiannisEtAl2019,AmanatidisSurvey2023,BabaioffFeige2025}. The quantitative study of approximate envy-freeness and its $\MMS$ implications was initiated by Amanatidis, Birmpas, and Markakis~\cite{ABM2018}. Subsequent works explored algorithmic implementations, welfare properties, and epistemic relaxations of $\PMMS$~\cite{FeldmanEtAl2026}.

In scheduling theory, identical-machine covering and pure price of anarchy were studied by Epstein, Kleiman, and van Stee~\cite{EpsteinEtAl2009} and Chen et al.~\cite{ChenEtAl2013}, who established the $17/10$ bound. See also G\'alvez, Soto, and Verschae~\cite{GalvezEtAl2018} and Zhou, Bai, and Wu~\cite{ZhouBaiWu2023}. Our work establishes the exact fixed-$n$ equivalences between these areas.

\section{Conclusion}

In this paper, we have resolved the exact quantitative relationship between local envy-free relaxations ($\EFX$ and $\PMMS$) and the global maximin share ($\MMS$) benchmark under nonnegative additive valuations. We have shown that the optimal universal approximation factor for both notions is identical and equals $10/17 \approx 0.588235$. This settles the open problem on the worst-case $\MMS$ guarantees of $\EFX$ and $\PMMS$ left unresolved since the work of Amanatidis, Birmpas, and Markakis~\cite{ABM2018}.

Our analysis reveals several structural insights:
\begin{enumerate}[leftmargin=2em]
\item \textbf{Uniformity of the Lower Bound:} Both $\EFX$ and $\PMMS$ reduce to the identical local inequality $v(B)-\min_{g\in B}v(g)\le 1$ on foreign bundles. A three-piece concave weight function translates this local condition into the sharp $10/17$ global guarantee.
\item \textbf{Tightness and Mixed Bundles:} We constructed an explicit parametric family of complete allocations that are simultaneously $\PMMS$ and $\EFX_0$ whose ratios converge to $10/17$. While the classical Sylvester-type recursion is optimal within the restricted class of homogeneous layered partitions, reaching the global $10/17$ frontier requires non-homogeneous mixed bundles.
\item \textbf{Duality with Machine Covering:} For every fixed number of agents $n$, the fair-division ratios $\rho_n^{\EFX}$ and $\rho_n^{\PMMS}$ are in exact, lossless correspondence with the pure price of anarchy and the exact-repartition locality gap in selfish machine covering games. This explains why the constant $17/10$ from scheduling games appears as $10/17$ in fair division.
\end{enumerate}

Several directions for future research emerge from this work. First, while the universal and asymptotic limits are settled, determining the exact finite-agent sequences $\rho_n^{\EFX}$ and $\rho_n^{\PMMS}$ for small fixed values of $n$ (such as $n=3,4,5$) remains an interesting open question. Second, it is natural to investigate whether similar combinatorial charging schemes can be extended to broader valuation classes, such as submodular or subadditive valuations, and whether other local fairness concepts correspond to alternative game-theoretic scheduling equilibria.

\appendix

\section{Appendix A: Elementary Weight Identities}

For $0\le x\le1/2$, $w(x)/x=1/(1+x)\ge2/3$. For $1/2<x\le2/3$, $w(x)/x=1/(2-x)\ge2/3$.
For nonnegative $a,b$,
\[
\frac{a}{1+a}+\frac{b}{1+b}-\frac{a+b}{1+a+b} = \frac{ab(2+a+b)}{(1+a)(1+b)(1+a+b)}\ge0.
\]
For $q\in(1/2,2/3]$,
\[
\frac{q}{2-q}+\frac{7/10-q}{17/10-q}-\frac12 = \frac{(2q-1)(6-5q)}{2(2-q)(17-10q)}\ge0.
\]

\section{Appendix B: Counting the Local Bundles in the Tight Family}

The total number of bundles in Table~\ref{tab:localpartition} is:
\[
\frac{m_\ell}{2}+\sum_{i=1}^\ell 6\cdot10^{\ell-i}+2\cdot10^{\ell-1}+\sum_{i=1}^{\ell-1}12\cdot10^{\ell-i-1}+1 = 2(10^\ell-1)=m_\ell.
\]

\section{Appendix C: Finite Values of the Layered Recursion}

Table~\ref{tab:comparison-values} compares the finite-depth values of the mixed-bundle family $r_\ell$ with the Sylvester recursion $c_r$.

\begin{table}[h]
\centering
\small
\begin{tabular}{@{}crccrc@{}}
\toprule
\multicolumn{3}{c}{\textbf{Mixed-Bundle Family}} & \multicolumn{3}{c}{\textbf{Sylvester Recursion}} \\
\cmidrule(lr){1-3} \cmidrule(lr){4-6}
$\ell$ & $N_\ell$ & $r_\ell = 1/C_\ell$ & Depth $r$ & \text{Capacity Sum} & $c_r$ \\
\midrule
1 & 126 & $60/101 \approx 0.594059$ & 1 & $3/2$ & $2/3 \approx 0.666667$ \\
2 & 1,386 & $300/509 \approx 0.589391$ & 2 & $5/3$ & $3/5 = 0.600000$ \\
3 & 13,986 & $1260/2141 \approx 0.588510$ & 3 & $71/42$ & $42/71 \approx 0.591549$ \\
4 & 139,986 & $5100/8669 \approx 0.588303$ & 4 & $509/301$ & $301/509 \approx 0.591356$ \\
$\infty$ & $\infty$ & $\mathbf{10/17 \approx 0.588235}$ & $\infty$ & $1+\sum 1/d_t$ & $\mathbf{c_\infty \approx 0.591355}$ \\
\bottomrule
\end{tabular}
\caption{Comparison of finite-depth bounds.}
\label{tab:comparison-values}
\end{table}

\end{document}